\definecolor{pCrbColor1}{rgb}{0.0,0.5,0.0}%
\definecolor{pCrbColor2}{rgb}{0.9,0.0,0.0}%
\definecolor{pMaxLikelihoodColor}{rgb}{0.9,0.0,0.0}%
\definecolor{pMaxLikelihoodColor2}{rgb}{0.0,0.0,0.9}%
\definecolor{mycolor1}{rgb}{1.00000,0.00000,1.00000}%
\definecolor{mycolor2}{rgb}{0.00000,0.49804,0.00000}%
\newcommand{\pLineW}{0.6pt}
\newcommand{\pLineWS}{0.6}
\newcommand{\pMarkSize}{2}
\newcommand{\pFigW}{0.7}
\newcommand{\pFigH}{0.4}
\newcommand{\pTickSize}{\scriptsize}
\newcommand{\pTextSize}{\scriptsize}
\newcommand{\pSnr}{\pTextSize{$\rm{SNR (dB)} $}}
\newcommand{\pRmse}{\pTextSize{$\rm{RMSE (degree)} $}}
\newcommand{\pResolution}{\pTextSize{ {Source Resolution Percentage (\%)} }}
\newcommand{\pSamples}{\pTextSize{ { Number of Snapshots $(N)$ } }}
\newcommand{\pL}{\pTextSize{ { Number of Sources $(L)$ } }}
\newcommand{\pCrbUC}{\pTextSize{$\rm{CRB} $}}
\newcommand{\pMaxLikelihood}{\pTextSize{ {MLE} }}
\newcommand{\pSPICE}{\pTextSize{ {SPICE} }}
\newcommand{\pCrbAllWithLess}{\pTextSize{$\rm{CRB}$ for the case S1}}
\newcommand{\pCrbOneSubarrayCanIdentify}{\pTextSize{$\rm{CRB}$ for the case S2}}
\newcommand{\pCrbOneSubarray}{\pTextSize{$\rm{CRB}$ of the subarray with $M_{k}=3$ sensors}}
\newcommand{\caseCone}{S1}
\newcommand{\caseCtwo}{S2}
\newcommand{\rmse}{RMSE}
\newcommand{\substep}[1]{\State \textbf{\quad$\bullet$}}
\newcommand{\zeros}{\pmb{0}}
\newcommand{\eq}{=}
\newcommand{\mdots}{,\ldots,}
\newcommand{\expect}{\mathbb{E}}
\newcommand{\Id}[1]{\pmb{I}_{#1}}
\newcommand{\diag}{ {\rm{diag}}  }
\newcommand{\katri}{\circ}
\renewcommand{\vec}{ {\rm{vec} }}
\renewcommand{\det}[1]{| #1 |}
\newcommand{\mj}{\jmath}
\newcommand{\Floor}[1]{\lfloor #1 \rfloor}
\newcommand{\reals}{\mathbb{R}}
\newcommand{\ones}[1]{\pmb{1}_{#1}}
\newcommand{\GammaFunc}[2]{\Gamma_{#1}^{c}(#2)}
\newcommand{\Tr}[1]{ {\rm{tr}} \left(#1\right) }
\newcommand{\LogLH}[1]{\mathcal{L}(#1)}
\newcommand{\Fim}[1]{
\ifthenelse{\equal{#1}{}}
{{\rm{FIM}}}
{[{\rm{FIM}}]_{#1}}
}
\newcommand{\D}[2]{ \frac{d #1}{d #2}}
\newcommand{\algsectionheading}[1]{ \ifthenelse {\equal{#1}{}} {part} {Part}}
\newcommand{\snr}{\text{SNR}}
\newcommand*{\inlineequation}[2][]{%
  \begingroup
    \refstepcounter{equation}%
    \ifx\\#1\\%
    \else
      \label{#1}%
    \fi
    \relpenalty=10000 %
    \binoppenalty=10000 %
    \ensuremath{%
      #2%
    }%
    ~\@eqnnum
  \endgroup
}
\newcommand{\stringcases}[3]{%
  \romannumeral
    \str@case{#1}#2{#1}{#3}\q@stop
}
\newcommand{\str@case}[3]{%
  \ifnum\pdf@strcmp{\unexpanded{#1}}{\unexpanded{#2}}=\z@
    \expandafter\@firstoftwo
  \else
    \expandafter\@secondoftwo
  \fi
    {\str@case@end{#3}}
    {\str@case{#1}}%
}
\newcommand{\str@case@end}{}
\long\def\str@case@end#1#2\q@stop{\z@#1}
\newcommand{\blkdiag}{{\rm blkdiag}}
\newcommand{\card}{{\rm card}}
\newcommand{\bl}{\pmb{b}}
\newcommand{\bls}{\mathcal{B}}
\newcommand{\eqMat}{\pmb{W}}
\newcommand{\Delay}{\tau}
\newcommand{\Delayk}{\Delay_k}
\newcommand{\ns}{\nsim}
\newcommand{\prop}{\mathcal{P}}
\newcommand{\M}{\breve{M}}
\newcommand{\RNK}{\rho}
\newcommand{\DOA}{\theta}
\newcommand{\DOAs}{\pmb{\DOA}}
\newcommand{\DOAsc}{\pmb{\nu}}
\newcommand{\gDOA}{\tilde{\DOAs}}
\newcommand{\MSR}{\pmb{x}}
\newcommand{\SRC}{\pmb{s}}
\newcommand{\NOS}{\pmb{n}}
\newcommand{\STR}{\pmb{A}}
\newcommand{\STRv}{\pmb{a}}
\newcommand{\STRkn}{\pmb{V}}
\newcommand{\STRknv}{\pmb{v}}
\newcommand{\PHASE}{\pmb{\Phi}}
\newcommand{\PHASEs}{\phi}
\newcommand{\bSTRkn}{\breve{\STRkn}}
\newcommand{\DISP}{\pmb{\zeta}}
\newcommand{\DISPl}{\pmb{\zeta}'}
\newcommand{\MCOV}{\pmb{R}}
\newcommand{\MCOVs}{\hat{\MCOV}}
\newcommand{\MCOVv}{\pmb{r}}
\newcommand{\MCOVvs}{\hat{\MCOVv}}
\newcommand{\SCOV}{\pmb{P}}
\newcommand{\SCOVd}{\pmb{\Lambda}}
\newcommand{\SCOVdv}{\pmb{\lambda}}
\newcommand{\SCOVds}{{\lambda}}
\newcommand{\SCOVo}{\pmb{F}}
\newcommand{\NCOV}{\sigma^2}
\newcommand{\gMCOV}{\tilde{\MCOV}}
\newcommand{\gSTRkn}{\tilde{\STRkn}}
\newcommand{\gSCOVdv}{\tilde{\SCOVdv}}
\newcommand{\gSCOVd}{\tilde{\SCOVd}}
\newcommand{\gSCOVds}{\tilde{\SCOVds}}
\newcommand{\gSCOVdvs}{\hat{\tilde{\SCOVdv}}}
\newcommand{\GRD}{\tilde{\DOAs}}
\newcommand{\pGRD}{\tilde{\DOA}}
\newcommand{\GRDl}{G}
\newcommand{\ID}{\pmb{I}}
\newcommand{\IDv}{\pmb{i}}
\newcommand{\MCovBig }{\breve{ \MCOV } }
\newcommand{\CrbPartOne}{\pmb{\Delta}_1}
\newcommand{\CrbPartTwo}{\pmb{\Delta}_2}
\newcommand{\AtHighSNRs}{|_{\SCOVds \gg \NCOV}}
\newcommand{\MCOVvi}{ \MCOVv\AtHighSNRs }
\newcommand{\MCOVi}{ \MCovBig\AtHighSNRs}
\newcommand{\MCOVii}{\overline{\pmb{V}}}
\newcommand{\CrbPartOnei}{\pmb{\Delta}_1\AtHighSNRs}
\newcommand{\CrbPartTwoi}{\pmb{\Delta}_2\AtHighSNRs}
\newcommand{\Crb}{{\rm{CRB}}}
\newcommand{\Crbi}{ { \rm{CRB} }_{\DOAs}^{-1}\AtHighSNRs }
\newcommand{\RNKi}{\overline{\RNK}}
\newcommand{\SCOVv}{\pmb{p}}
\newcommand{\corr}{\epsilon}
\newcommand{\MCORR}{\pmb{\Upsilon}}
\newcommand{\CrbPartOneic}{{{\pmb{u}}}}
\newcommand{\MCOVic}{\overline{\overline{\pmb{V}}}}
\newcommand{\SpiceW}{w}
\newcommand{\SpiceWNos}{\overline{w}}
\titleformat{\section}{\large\bfseries}{\thesection.}{.5em}{}
\titlespacing*{\section}{0pt}{*3}{*2}
\titleformat{\subsection}{\normalfont\bfseries}{\thesubsection.}{.5em}{}
\titlespacing*{\subsection} {0pt}{*3}{*2}
\titleformat{\subsubsection}{\normalfont\bfseries}{\thesubsubsection.}{.5em}{}
\titlespacing*{\subsubsection} {0pt}{*3}{*2}
\theoremstyle{plain}
\newtheorem{thm}{Theorem}[] %
\newtheorem{crly}{Corollary}[]
\newtheorem{dfn} {Definition}[]
\begin{document}

\author{Wassim Suleiman, 
        Pouyan Parvazi, %
        Marius Pesavento,
        and~Abdelhak M. Zoubir%
}
\date{}  
\title{Non-Coherent Direction-of-Arrival Estimation Using Partly Calibrated Arrays}        

\maketitle

\begin{abstract}
In this paper, direction-of-arrival (DOA) estimation using 
non-coherent processing for partly calibrated arrays
composed of multiple subarrays is considered. 
The subarrays are assumed to compute locally the sample covariance matrices
of their measurements and communicate them to the
processing center.
A sufficient condition for 
the unique identifiability of the sources in
the aforementioned non-coherent processing scheme
is presented.
We prove that, under mild conditions, 
with the non-coherent system of subarrays,
it is possible 
to identify more sources
than identifiable by each individual subarray.
This property of non-coherent processing has not been investigated before.
We derive the Maximum Likelihood estimator (MLE)
for DOA estimation at the processing center using the sample covariance matrices received from the subarrays.
Moreover, the Cram\'er-Rao Bound (CRB) for 
our measurement model is derived 
and is used to assess the presented DOA estimators.
The behaviour of the CRB at high signal-to-noise ratio (SNR) is analyzed.
In contrast to coherent processing,
we prove that the CRB approaches zero at high SNR only if at least one subarray
can identify the sources individually.
\end{abstract}

\section{Introduction}

DOA estimation using sensor arrays 
plays a fundamental role in many applications such as radar, sonar
and seismic exploration \cite{VanTrees2002}. 
Centralized subspace-based DOA estimation algorithms such as MUSIC \cite{Schmidt1986},
root-MUSIC \cite{barabell1983improving}, MODE
\cite{Stoica1990}, and WSF \cite{Viberg1991}
exhibit the super-resolution property and are asymptotically efficient. 
These algorithms 
are applicable only when all sensor locations are known, i.e.,
the array is fully calibrated.
For partly calibrated arrays with unknown displacements between the subarrays, 
subspace-based algorithms, such as ESPRIT \cite{Roy1989}, RARE \cite{Pesavento2002}
and
algorithms proposed in \cite{Parvazi2011} and \cite{parvazi2011new}, can be applied. 
These algorithms perform coherent processing, i.e, 
they require the covariance matrix of the whole array including the inter-subarray covariance matrices.
Consequently, the subarrays are required to send their raw measurements to the processing center (PC)
which then computes the overall array covariance matrix.
Disadvantages encountered in coherent processing include the huge communication overhead at the subarrays
and the high computational load at the PC.

Since, non-coherent processing techniques are carried out using 
only the subarray covariance matrices \cite{stoica1995decentralized}, 
the largest available covariance lag in non-coherent processing
is the one corresponding to the subarray with the largest aperture, 
i.e., the subarray which possesses the largest inter-sensor distance.
Whereas, in coherent processing, the largest available covariance lag
corresponds to the whole array aperture which is larger than that of the individual subarrays.  
Thus, the DOA estimation performance of non-coherent processing 
is inferior to that of coherent-processing.
Nevertheless, non-coherent processing is preferred in large wireless sensor networks
since it offers a huge reduction in the communication overhead associated with  
communicating the raw subarray measurements to the PC as required in coherent processing.
The computational load associated with non-coherent processing is also 
much smaller than that of the coherent processing,
since only the small subarray covariance matrices are computed and not the large overall array covariance matrix.
Thus, non-coherent processing is more convenient for decentralized processing \cite{stoica1995decentralized}.
Moreover, the computation of the inter-subarray covariance matrices in coherent processing
requires synchronized subarrays, 
which is not always possible especially for widely separated subarrays \cite{stoica1995decentralized}.
Hence, in large arrays,
it is necessary to resort to non-coherent processing.
In such cases, 
the measurements of each subarray are processed coherently,
namely the subarray covariance matrices are computed locally at the subarrays 
and communicated to the PC.
Then, in the PC, 
non-coherent processing (using only local subarray covariance matrices) 
is carried out to achieve the DOA estimation task.

In \cite{wax1985decentralized,rieken2004generalizing},  
the MUSIC algorithm is generalized to non-coherent processing
where it is assumed that the subarrays locally estimate their noise subspaces
and send them to the PC.
In \cite{Soderstrom1992}, another version of the MUSIC algorithm for non-coherent processing is analyzed. 
In this algorithm, the subarrays send the locally estimated DOAs and their estimated
variances to the PC. 
A similar method which is robust against uncertainties in 
the statistical distribution of the noise 
is presented in \cite{lee1990robust}.
In \cite{stoica1995decentralized}, it is proposed to perform DOA estimation 
using the MODE algorithm individually in each subarray.
At the PC, the DOA estimates are optimally combined as in \cite{Soderstrom1992}.
In \cite{suleiman2014noncoherent}, the root-MUSIC algorithm \cite{barabell1983improving} is 
generalized for non-coherent processing
where the subarrays locally compute the root-MUSIC polynomial coefficients  
and communicate them to the PC.
Although the algorithms presented in 
\cite{wax1985decentralized,lee1990robust,Soderstrom1992,stoica1995decentralized,rieken2004generalizing,suleiman2014noncoherent} 
are designed for non-coherent processing, 
they all assume that each subarray can locally identify all the sources.
Our primary goal in this paper is to overcome this restricting assumption.
 
In \cite{sheinvald1999direction}, direction finding 
using fewer receivers than the number of sources is introduced.
Since only fewer receiver than the sources (and hence fewer than the sensors)
are available, it is impossible to sample the output of all the sensors simultaneously.
Thus, time varying processing is introduced where a different 
subset of the available sensors are sampled at each time period and
their measurement covariance matrix is computed.
The DOA estimation problem in this context  
can be considered as a non-coherent processing DOA estimation problem,
since the covariance matrices between different
sensor subsets are not available.
However, the authors of \cite{sheinvald1999direction} assume a fully calibrated array,
whereas this assumption is not made in our paper.
Moreover, the algorithms introduced in \cite{sheinvald1999direction}
perform an exhaustive search over the directions which is impractical 
when the number of sources is larger than two. 

In this paper, DOA estimation using non-coherent processing for partly calibrated arrays is considered.
We focus on the case where none of the subarrays is able to identify all the sources locally.
We present a bound on the maximum number of identifiable sources.
Using this bound, we show that for particular array geometries,
it is possible to identify more sources than each subarray can identify individually.
Thus, we achieve DOA estimation
in more general scenarios than considered in  
\cite{wax1985decentralized,lee1990robust,Soderstrom1992,stoica1995decentralized,rieken2004generalizing,suleiman2014noncoherent}.

Furthermore, 
two DOA estimation approaches are proposed:
1) the MLE
and 2) 
a computationally simpler DOA estimation approach based on sparse signal representation (SSR).
Moreover, the Cram\'er-Rao Bound (CRB)
for our measurement model is presented and analyzed.

We remark that the non-coherent processing based DOA estimation approaches 
considered in this paper and in
\cite{wax1985decentralized,lee1990robust,Soderstrom1992,stoica1995decentralized,rieken2004generalizing,suleiman2014noncoherent}
differs from that presented in \cite{kim2015non}.
Where, in \cite{kim2015non}, DOA estimation is achieved from magnitude only measurements.
Thus, the approach of \cite{kim2015non} introduces ambiguities in DOA estimation which 
have been resolved by assuming sources at known locations.
However, the approach of \cite{kim2015non} assumes less information about the structure of the subarrays
when compared to the approaches considered in this paper and in 
\cite{wax1985decentralized,lee1990robust,Soderstrom1992,stoica1995decentralized,rieken2004generalizing,suleiman2014noncoherent}.

The remainder of the paper is organized as follows.
In Section~\ref{sec:signal-model}, the signal model is introduced.
The case of uncorrelated sources is considered in Section~\ref{sec:uc}.
The model parameter identifiability is studied in Section~\ref{sec:id}.
The MLE and the CRB are derived
in Section~\ref{sec:mle} 
and in Section~\ref{sec:crb}, respectively.
DOA estimation based on the SSR approach
is proposed in Section~\ref{sec:ssr-uc}.
In Section~\ref{sec:extention},
the MLE and the CRB are
 extended to the case of correlated sources.
In Section~\ref{sec:simulation}, simulation results are presented.

In this paper,
lower-case bold symbols are used to denote vectors where
upper-case bold symbols denote matrices.
The transpose, complex conjugate, and the Hermitian operators
are denoted as $(\cdot)^T$, $(\cdot)^*$, and $(\cdot)^H$, respectively.
The symbols $\katri$ and $\otimes$ 
denote the Khatri-Rao and Kronecker products, respectively. 
The determinant and the trace of  a matrix are denoted as $\det{\cdot} $
and  $\Tr{\cdot}$, respectively.
The symbols $\Id{i}$, $\text{diag}(\cdot)$, $\text{blkdiag}(\cdot)$, $\vec (\cdot )$,
$[\pmb{A}]_{i,j}$, and $[\pmb{a}]_{i}$
denote
the identity matrix of size $i \times i$,
diagonal matrices, block diagonal matrices,
vectorization of a matrix,
the $(i,j)$th entry of a matrix, and the $i$th entry of a vector,
respectively.  
We write 
$\ones{i}$,
$\zeros_i$,
and
$\IDv_k$
to denote the vector of all ones of size $i$,
the vector of all zeros of size $i$,
and vectorization of the identity matrix of size 
equal to the number of sensors at the $k$th subarray,
respectively.
The expectation of random variables,
the floor operator, and
the magnitude and angle of complex variables
are denoted as 
$\expect (\cdot) $,
$\Floor{ \cdot }$, 
$|{\cdot}|$, and $\angle \cdot$, respectively,
where the imaginary unit is denoted as $\mj$.
The derivative of a function $f(x)$
with respect to a variable $x$ is denoted as
$\D{f(x)}{x}$.

\section{Signal Model}
\label{sec:signal-model}
We consider a planar partly calibrated array
composed of $K$ perfectly calibrated subarrays,
i.e., for each subarray the following assumptions hold:
\begin{enumerate}
  \item[A1] the narrow-band assumption, 
  \item[A2] the relative sensor locations are known,
  \item[A3] the sensors use a common sampling clock.
\end{enumerate}
However, these assumptions do not hold between the subarrays.
The $k$th subarray is comprised of $M_k$ sensors,
thus, the total number of sensors in the array is 
$M \eq \sum_{k=1}^K M_k$.
We define 
$\DISP_k \in \reals^2$
as the vector containing the unknown displacement
of the first sensor (reference) of the $k$th subarray 
and the reference sensor of the first subarray, thus,
$\DISP_1 =[0,0]^T$.
The considered known
 relative position of the $i$th sensor of the $k$th
subarray with respect to the first sensor of the $k$th subarray 
is denoted by $\DISPl_{k, i}$, for $i=1,\ldots,M_k$, and thus $\DISPl_{k, 1}=[0,0]^T$.

Signals of $L$ narrow-band far-field sources impinge onto the array
from directions $\DOAs \eq [\DOA_1, \ldots, \DOA_L]^T$.
The response of the $k$th subarray corresponding to a source at direction $\DOA$ is given by
\begin{equation}
\STRv_k (\DOA) \eq
\STRknv_k (\DOA) \PHASEs(\DOA, \DISP_k),
\label{eq:ak}
\end{equation}
where 
$\PHASEs(\DOA, \DISP_k) \eq \exp( \mj \frac{2 \pi}{\lambda_c}  \DISP_k^T \DOAsc(\DOA) ) $
is an unknown phase shift, 
$\lambda_c$ is the wavelength corresponding to the signal carrier frequency,
and $\DOAsc(\DOA)=[\sin(\DOA),\cos(\DOA)]^T$.
The vector $\STRknv_k(\DOA)$ is defined as
\begin{equation} 
\STRknv_k(\DOA) \eq [1, \exp(\mj \frac{2 \pi}{\lambda_c} \DISPl^T_{k,2} \DOAsc(\DOA) ), \ldots,
\exp(\mj \frac{2 \pi}{\lambda_c} \DISPl^T_{k, M_k} \DOAsc(\DOA) )]^T.
\label{eq:vk}
\end{equation}
In contrast to the phase shift $\PHASEs(\DOA, \DISP_k)$,
the vector $\STRknv_k (\DOA)$ is fully known as a function of $\DOA$.

The vector of the baseband signals received 
at the $k$th subarray is given by
\begin{equation}
\MSR_k(t-\Delayk) = \STR_k(\DOAs, \DISP_k) \SRC(t-\Delayk) + \NOS_k(t -\Delayk)
\label{eq:xk}
\end{equation}
where  $\Delayk$ is the sampling offset 
at the $k$th subarray and
$\NOS_k(t-\Delayk)$
is the vector containing the 
complex circular Gaussian sensor noise with 
zero-mean and variance $\NCOV$.
The vector
$\SRC(t-\Delayk)$
contains the complex circular Gaussian source signals
with zero-mean and covariance $\SCOV$.
We assume that the statistical properties of the sources
observed by different subarrays are identical,
i.e., 
\begin{equation}
\SCOV = \expect \big( \SRC(t-\Delay_k) \SRC^H(t-\Delay_k)   \big),
\label{eq:src-power}
\end{equation}
for $k=1 \mdots K$.
The steering matrix 
$\STR_k(\DOAs, \DISP_k) \eq
[\STRv_{k}(\DOA_1, \DISP_k),\ldots, \STRv_{k}(\DOA_L, \DISP_k)]^T
$ is written as
\begin{equation}
\STR_k (\DOAs, \DISP_k) = \STRkn_k(\DOAs) \PHASE_k(\DOAs, \DISP_k),
\label{eq:Ak}
\end{equation} 
where the matrix 
\begin{equation}
\STRkn_k(\DOAs)=[\STRknv_k(\DOA_1),\ldots,\STRknv_k(\DOA_L)]
\label{eq:Vk}
\end{equation}
depends only on the DOAs, whereas the diagonal matrix
\begin{equation}
\PHASE_k(\DOAs, \DISP_k)= \diag \big( \PHASEs_{k1} ,\ldots, \PHASEs_{kL} \big),
\label{eq:Phik}
\end{equation}
for $\PHASEs_{kl} = \PHASEs(\DOA_l, \DISP_k)$,
depends on the DOAs and the unknown displacements in $\DISP_k$.
In the following, 
the dependency on $\DOAs$ and $\DISP_k$ is dropped for notation convenience.

The true measurement covariance matrix 
of the $k$th subarray is written as
\begin{equation}
\MCOV_k \!= \!\expect \big( \MSR_k\!(t\!-\!\Delayk) \MSR_k^H\!(t\!-\!\Delayk) \big)
=\STRkn_k \PHASE_k \SCOV \PHASE_k^H \STRkn_k^H
 + \NCOV \ID_{M_k}, 
\label{eq:Rk}
\end{equation}
where the $M_k\times M_k$ identity matrix is denoted by $\ID_{M_k}$
and $\SCOV$ is defined in (\ref{eq:src-power}).
For the later use,
the source covariance matrix $\SCOV$ is partitioned as
\begin{equation}
\SCOV \eq \SCOVd + \SCOVo,
\label{eq:lambda-f}
\end{equation}
where the matrices $\SCOVd$ 
and $\SCOVo$ contain the diagonal and off-diagonal entries of 
the matrix $\SCOV$, respectively.
Denote the diagonal entries of the matrix
$\SCOVd$ as $\SCOVds_l$, for $l=1,\ldots,L$,
then $\SCOVds_l$ corresponds to the power of the $l$th source,
$\SCOVds_l > 0$.
We define
\begin{equation}
\SCOVdv = [\SCOVds_1,\ldots,\SCOVds_L]^T
\label{eq:lambda}
\end{equation}
to be the diagonal of the matrix $\SCOVd$.
The $(i,j)$th entry of matrix $\SCOVo$,
denoted as $[\SCOVo]_{i,j}$
corresponds to the correlation between  
the $i$th and $j$th sources.
The $i$th and $j$th sources are coherent or fully correlated when 
$\left|[\SCOVo]_{i,j}\right|  = \sqrt{\SCOVds_i \SCOVds_j}$.

The sample estimate of $\MCOV_k$ is computed using $N$ snapshots of the $k$th subarray output
as
\begin{equation}
\MCOVs_k =  \frac{1}{N} \sum_{t=1}^{N} \MSR_k(t)  \MSR_k^H (t), 
\label{eq:Rk-hat}
\end{equation}  
where without loss of generality, we assume that the same number of samples $N$
is available at all subarrays.

In this work, we assume that 
the subarrays send their locally estimated sample covariance matrices 
$\MCOVs_k$, for $k=1,\ldots,K$,
to the PC\footnote{%
This requires sending $M_k^2$ real numbers to the PC, instead of $2 N M_k$ in the case of sending raw  measurements.%
}, 
which carries out the DOA estimation algorithm.
This processing type is referred to as 
\emph{non-coherent processing} \cite{stoica1995decentralized},
since only the local subarray covariance matrices are available at the PC.
Compared to \emph{coherent processing}
where the sample estimate of the cross-subarrays covariance matrices,
i.e., $\expect[\MSR_k(t)\MSR_i^H(t)]$, for $i\neq k$, $i,k=,1\ldots,K$, 
are available at the PC\footnote{%
Note that in the model (\ref{eq:xk}),
the computation of $\expect[\MSR_k(t-\Delay_k)\MSR_i^H(t-\Delay_i)]$
when $k \neq i$ yields a covariance of zero if $|\Delay_k-\Delay_i|$
exceeds the coherence time of the signal waveforms such 
 that $\expect[\SRC(t-\Delay_k)\SRC^H(t-\Delay_i)]=\zeros_L\zeros_L^T$.}.
Which requires a synchronized subarray system,
i.e., $\Delayk=0$ for $k=1,\ldots,K$. 
We remark that: 
\begin{itemize}
  \item In non-coherent processing, 
  the resolution capability of the array is limited, compared to coherent processing, 
  since the largest available covariance lag corresponds to the largest subarray.
  Whereas, in coherent processing, 
  the largest available covariance lag corresponds to the
  array aperture.
  \item The non-coherent processing scheme is more suitable 
  for decentralized processing than the coherent processing one,
  since each subarray can act as a decentralized processing node 
  which computes the local covariance matrix of the subarray
  and sends it to the PC. Whereas, in coherent processing, 
  the computation of the cross-subarray covariance matrices requires
  either sending the raw measurement to the PC or the use of the averaging consensus (AC) protocol, i.e.,
  it involves a much larger communication overhead, see \cite{Scaglione2008,suleiman2015j2}.    
\end{itemize}

\section{DOA Estimation for Uncorrelated Sources}
\label{sec:uc}

In this section, 
we consider the special case of perfectly uncorrelated 
sources for which the structure of
the covariance matrix introduced in (\ref{eq:Rk}) 
can be  simplified.
We analyze the identifiability of our model 
and derive the CRB and the MLE.
Moreover, DOA estimation using SSR is presented.

Under the assumption of uncorrelated sources,
the source covariance matrix $\SCOV$ in (\ref{eq:src-power}) is diagonal, 
i.e., the entries of the cross-correlation matrix $\SCOVo$ in (\ref{eq:lambda-f}) 
are zeros 
and $\SCOV=\SCOVd$.
Since the matrix $\PHASE_k$ is also diagonal 
with unit amplitude entries we can write
\begin{equation}
\PHASE_k \SCOV \PHASE_k^H = \SCOV = \SCOVd.
\label{eq:Pk-uc} 
\end{equation}

Substituting (\ref{eq:Pk-uc}) in (\ref{eq:Rk}) 
yields
\begin{equation}
\MCOV_k =  \STRkn_k \SCOVd \STRkn_k^H  + \NCOV \ID_{M_k}.
\label{eq:Rk-uc}
\end{equation}

In \cite{graham1981kronecker}, 
the following matrix identity 
regarding the vectorization of the product of
three matrices $\pmb{M}_1, \pmb{M}_2, $ and $\pmb{M}_3$ of appropriate sizes
is proved:
\begin{equation}
\vec\big( \pmb{M}_1 \pmb{M}_2 \pmb{M}_3 \big) =
(\pmb{M}_3^T \otimes \pmb{M}_1) \vec(\pmb{M}_2).
\label{eq:vec} 
\end{equation}
Denote as $\MCOVv_k=\vec(\MCOV_k)$ the vectorization of the $k$th subarray 
measurement covariance matrix. 
Then, substituting (\ref{eq:Rk-uc}) and (\ref{eq:vec}) in $\MCOVv_k$ yields
\begin{equation} 
\MCOVv_k  = \big( \STRkn_k^* \otimes \STRkn_k \big) \vec(\SCOVd)
 + \NCOV \IDv_k,
\label{eq:rk-uc-m}
\end{equation}
where %
$\IDv_k \eq \vec(\ID_{M_k})$.
Since $\SCOVd$ is a diagonal matrix,
(\ref{eq:rk-uc-m}) is further reduced to
\begin{equation}
\MCOVv_k  =  \bSTRkn_k \SCOVdv  + \NCOV \IDv_k,
\label{eq:rk-uc}
\end{equation} 
where the vector $\SCOVdv$ is defined in (\ref{eq:lambda}) and
the $M_k^2 \times L$ matrix 
\begin{equation}
\bSTRkn_k \eq \big( \STRkn_k^* \katri \STRkn_k \big)
\label{eq:bVk}
\end{equation}
contains the  
columns of the matrix $\big( \STRkn_k^* \otimes \STRkn_k \big)$
corresponding to the diagonal of $\SCOVd$.
The matrix $\bSTRkn_k$ is referred as the co-subarray manifold\footnote{%
The expression co-array manifold have been used in \cite{Abramovich1999Resolving}
in the context of nonuniform linear antenna arrays 
to denote the Katri-Rao product of the conjugate array response with itself.}.
We define the concatenation
of all vectorized measurement covariance matrices as
\begin{equation}
\MCOVv = [\MCOVv^T_1,\ldots,\MCOVv_K^T]^T,
\label{eq:r}
\end{equation}
where $\MCOVv$ is of size $\M \eq \sum_{k=1}^KM_k^2$.
By substituting (\ref{eq:rk-uc}) in (\ref{eq:r}),
the vector $\MCOVv$ becomes
\begin{equation}
\MCOVv  =  \bSTRkn \SCOVdv  + \NCOV \IDv,
\label{eq:r-uc}
\end{equation} 
where 
\begin{equation}
\bSTRkn \eq [\bSTRkn_1^T,\ldots,\bSTRkn_K^T]^T
\label{eq:bV}
\end{equation}
is the co-array manifold and
\begin{equation}
\IDv \eq [\IDv_1^T,\ldots,\IDv_K^T]^T.
\label{eq:i}
\end{equation}
We denote as $\MCOVvs$ and $\MCOVvs_k$, for $k=1,\ldots,K$,
the sample estimate of $\MCOVv$ and $\MCOVv_k$, respectively,
which are obtained from the sample covariance matrix in (\ref{eq:Rk-hat}). 

\subsection{Identifiability} 
\label{sec:id}
In this subsection, 
we first 
revise the condition of parameter identifiability as introduced in
\cite{hochwald1996identifiability},
then we present a sufficient condition on 
the maximum number of identifiable (uncorrelated) sources.

Let $\DOAs'=[\DOA_1',\ldots,\DOA_L']^T$ and $\DOAs''=[\DOA_1'',\ldots,\DOA_L'']^T$ 
denote two vectors each of them containing $L$ 
pairwise-different DOAs.
By pairwise-different DOA vector $\DOAs'$ we mean that 
$\DOA_i' \neq \DOA_j'$ for $i \neq j$ and $i,j=1,\ldots,L$.
Then, we write $\DOAs' \ns \DOAs''$ if there exist an index $i \leq L$
where for all $j \leq L$, $\DOA_i' \neq \DOA_j''$.
In other words, at least one entry of $\DOAs'$
is not equal to any entry of $\DOAs''$.
In the following, we present the definition of identifiability \cite{hochwald1996identifiability}. 
\begin{dfn}[Identifiability]
\label{dfn:id}
In the noise free case, $L$ sources 
with DOAs $\DOAs$ and powers $\SCOVdv$
are uniquely identifiable if 
\begin{equation}
\bSTRkn(\DOAs) \SCOVdv \neq \bSTRkn(\DOAs') \SCOVdv',
\label{eq:id} 
\end{equation}
for any vector with positive entries $\SCOVdv'$
and for any pairwise-different DOA vector $\DOAs'$,
where $\DOAs \ns \DOAs'$.%
\end{dfn}%
Note that in the noise free case,
the product $\bSTRkn(\DOAs) \SCOVdv$
consist in the vectorized measurement covariances,
i.e., $\MCOVv = \bSTRkn(\DOAs) \SCOVdv$.
Let $F(\MSR(t)|\DOAs)$
denotes the distribution of the array measurements
for a particular source directions $\DOAs$.
Since the subarray measurements follows a zero mean Gaussian distribution 
with (vectorized) covariances $\MCOVv$,
Definition~\ref{dfn:id} implies that,
the direction of the sources are uniquely identifiable 
if two parameter vectors $\DOAs$ and $\DOAs'$, where $\DOAs \ns \DOAs'$,
yield different measurement distributions,
i.e., $F(\MSR(t)|\DOAs) \neq F(\MSR(t)|\DOAs')$ for
$\DOAs \ns \DOAs'$ \cite{hochwald1996identifiability}.

Let $\RNK$ denotes the Kruskal rank \cite{stegeman2007kruskal,kruskal1977three} of the co-array manifold matrix $\bSTRkn$,
i.e., $\RNK$ is the largest integer such that
the columns of the matrix $\bSTRkn([\DOA_1 \mdots \DOA_{\RNK}]^T)$
are linearly independent for any 
vector $[\DOA_1 \mdots \DOA_{\RNK}]^T$ with pairwise different DOAs.
Based on $\RNK$, the following theorem
provides a sufficient condition for the unique identifiability of $L$ sources.
\begin{thm}[Sufficient condition for identifiability]
\label{thm:id}
The $L$ DOAs $\DOAs$ can be uniquely identified from covariances 
$\MCOVv = \bSTRkn \SCOVdv$ provided that 
\begin{equation}
L \leq  \Floor{\frac{ \RNK }{2} },
\label{eq:id-nc} 
\end{equation}  
where $\RNK$ is the Kruskal rank
of the co-array manifold $\bSTRkn$.
\end{thm}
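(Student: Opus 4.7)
The plan is to argue by contradiction using the Kruskal rank property, in the standard style of identifiability proofs for Khatri--Rao/co-array manifolds. Suppose that identifiability fails. Then by Definition~\ref{dfn:id} there exist two pairwise-different DOA vectors $\DOAs$ and $\DOAs'$ with $\DOAs \ns \DOAs'$ and two positive power vectors $\SCOVdv$ and $\SCOVdv'$ such that
\begin{equation*}
\bSTRkn(\DOAs)\SCOVdv \;=\; \bSTRkn(\DOAs')\SCOVdv'.
\end{equation*}
The strategy is to rewrite this equation as a single linear combination of columns of $\bSTRkn$ evaluated at \emph{distinct} DOAs, count how many such columns occur, and then invoke the Kruskal rank of $\bSTRkn$ to force a contradiction with the positivity of the powers.

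The main bookkeeping step is to separate indices into three groups: (i) DOAs common to both $\DOAs$ and $\DOAs'$, (ii) DOAs appearing only in $\DOAs$, and (iii) DOAs appearing only in $\DOAs'$. Let $L_c$ denote the number of common DOAs and $L_a$ the number of DOAs only in $\DOAs$; by symmetry the number of DOAs only in $\DOAs'$ is also $L_a$, and $L_c+L_a=L$. The assumption $\DOAs \ns \DOAs'$ guarantees $L_a\geq 1$. Collecting terms according to these three groups, the equation above becomes
\begin{equation*}
\sum_{l\in\text{(i)}} (\SCOVds_l-\SCOVds_l')\,\bSTRkn(\DOA_l) \;+\; \sum_{l\in\text{(ii)}} \SCOVds_l\,\bSTRkn(\DOA_l) \;-\; \sum_{l\in\text{(iii)}} \SCOVds_l'\,\bSTRkn(\DOA_l') \;=\; \zeros,
\end{equation*}
which is a linear dependence among at most $L_c+2L_a = L+L_a \leq 2L$ columns of $\bSTRkn$ evaluated at \emph{pairwise distinct} DOAs.

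Now invoke the hypothesis $L\leq\Floor{\RNK/2}$, equivalently $2L\leq\RNK$. By the definition of the Kruskal rank $\RNK$, any set of at most $\RNK$ columns of $\bSTRkn$ corresponding to pairwise distinct DOAs is linearly independent. Hence all coefficients in the displayed combination must vanish. In particular, for every index $l$ in group (ii) we obtain $\SCOVds_l=0$, which contradicts $\SCOVds_l>0$, because group (ii) is nonempty ($L_a\geq 1$). This contradiction establishes identifiability.

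The only nontrivial step is the partitioning/bookkeeping that correctly counts $L_c+2L_a$ distinct DOAs and matches it against $\RNK$; the rest is a direct application of the Kruskal rank definition. No calculations with the specific Khatri--Rao structure of $\bSTRkn$ are needed at this stage, since everything is encoded in the scalar $\RNK$.
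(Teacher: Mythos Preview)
Your proof is correct and follows essentially the same approach as the paper: partition the two DOA vectors into common and non-common parts, rewrite the equality as a vanishing linear combination of at most $2L-q$ (your $L_c+2L_a$) columns of $\bSTRkn$ at pairwise distinct DOAs, and use $2L\leq\RNK$ together with the Kruskal rank definition to force the coefficients on the non-common DOAs to zero, contradicting positivity of the source powers. The paper phrases this as two cases ($q=L$ versus $q<L$) rather than a proof by contradiction, but the argument is identical.
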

\begin{proof}
See Appendix \ref{apx:thm-id}.
\end{proof}
Denote by $\bl_{k,i,j}$ the ($i,j$)th covariance lag of the $k$th subarray,
i.e., $\bl_{k,i,j} = \DISPl_{k,j}-\DISPl_{k,i}$
and let $\bls_k$ denotes the set of all different covariance lags of the  $k$th subarray,
i.e.,
\begin{equation}
\bls_k=\{\bl_{k,i,j}, i,j=1 \mdots M_k\}.
\end{equation} 
Further, let $\bls$ denotes the set of different covariance lags of the whole array,
i.e.,
\begin{equation}
\bls= \bigcup_{k=1}^K \bls_k.
\end{equation}
Then the Kruskal rank $\RNK$  of the matrix $\bSTRkn$
is bounded by the number of covariance lags in the set $\bls$.
This observation yields the following result.
\begin{crly}
\label{crly:id-1}
The number of sources which can be uniquely identified from covariances 
$\MCOVv$ is smaller than $\Floor{ \card( \bls )/2 }$,
where $\card( \bls )$ is the cardinality of the set $\bls$.  
\end{crly}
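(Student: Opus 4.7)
The plan is to bound the Kruskal rank $\RNK$ of the co-array manifold $\bSTRkn$ by the cardinality $\card(\bls)$ of the set of distinct covariance lags, and then invoke Theorem~\ref{thm:id} directly.

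First, I would inspect the row structure of $\bSTRkn$. From the definition (\ref{eq:bVk}) together with (\ref{eq:vk}), each row of $\bSTRkn_k$ is indexed by a pair $(i,j)$ with $i,j \in \{1,\ldots,M_k\}$, and its $l$th entry, viewed as a function of $\DOA_l$, equals
\begin{equation*}
\exp\!\Bigl(\mj \tfrac{2\pi}{\lambda_c} (\DISPl_{k,j}-\DISPl_{k,i})^T \DOAsc(\DOA_l)\Bigr),
\end{equation*}
which depends on $(i,j,k)$ only through the lag vector $\bl_{k,i,j}=\DISPl_{k,j}-\DISPl_{k,i}\in \bls_k$. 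Consequently any two rows of $\bSTRkn$ (from the same or different subarrays) that correspond to identical lag vectors are literally equal as functions of $\DOAs$.

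Next, I would conclude that the full matrix $\bSTRkn$ defined in (\ref{eq:bV}), whose rows range over all pairs $(k,i,j)$, has at most $\card(\bls)$ distinct rows. Hence $\rank(\bSTRkn(\DOAs'))\le \card(\bls)$ for every choice of $\DOAs'$, and since by definition the Kruskal rank satisfies $\RNK \leq \rank(\bSTRkn(\DOAs'))$ for suitable $\DOAs'$, we obtain $\RNK \leq \card(\bls)$.

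Finally, combining this bound with Theorem~\ref{thm:id} gives $L \leq \Floor{\RNK/2} \leq \Floor{\card(\bls)/2}$, which is the claim. The only subtle point — and hence the main thing to verify carefully — is the matching of rows of $\bSTRkn$ to elements of $\bls$: one must be sure that rows from \emph{different} subarrays that happen to share a lag are counted only once in $\card(\bls)$ (which they are, because $\bls$ is defined in (\ref{eq:bV}) as the union over $k$ of $\bls_k$), so that the row-rank bound really is $\card(\bls)$ rather than $\sum_k \card(\bls_k)$. No estimates beyond this identification of repeated rows are needed, so the argument is essentially combinatorial.
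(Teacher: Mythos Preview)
Your proposal is correct and follows exactly the approach the paper indicates: the paper states, immediately before Corollary~\ref{crly:id-1}, that ``the Kruskal rank $\RNK$ of the matrix $\bSTRkn$ is bounded by the number of covariance lags in the set $\bls$'' and that ``this observation yields the following result,'' and you have simply filled in the details of that observation (identifying repeated rows via their lag vectors) before invoking Theorem~\ref{thm:id}. One cosmetic slip: the set $\bls$ is defined in the displayed union just before the corollary, not in equation~(\ref{eq:bV}) as you cite.
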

Corollary~\ref{crly:id-1} 
implies that the number of uniquely identifiable sources
using non-coherent processing can be increased by 
designing the subarrays with different covariance lags.
Note that if all subarrays admit the same covariance lags, 
e.g., if the subarrays are identical,
then the number of uniquely identifiable sources by the whole array is equal 
to the number identifiable by one individual subarray. 
The following example provides further insight.

\subsubsection*{Example}
Consider an array composed of $K=3$ identically oriented linear subarrays
where the $k$th subarray includes $M_k=2$ sensors.
The relative positions between the successive sensors in the
subarrays are assumed to be 
$d_1=1$, $d_2=2$ and $d_3=3$ half-wavelength, respectively,
see Fig.~\ref{fig:example1}.
For coherent processing the maximum number 
of identifiable sources using this array is 
$M-K=3$ (see \cite{Pesavento2002}).
Note that coherent processing scenario 
represents an upper bound on the number of uniquely identifiable sources 
using non-coherent processing,
since more covariance lags are available for coherent processing, 
namely, the covariance lags corresponding to the relative position 
of two sensors belonging to different subarrays.
Thus, $L\leq3$ is a necessary condition for identifying the sources using non-coherent processing.
In the following, based on Theorem~\ref{thm:id},
we show that $L\leq3$ is a sufficient condition for identifying the sources
in the considered array example.

The subarray steering vectors in (\ref{eq:vk}) are reduced to
$\STRknv_k(\DOA) = [1, e^{\mj d_k \pi \sin \DOA}]^T$,
for $k=1 \mdots 3$, in this example. 
Thus, the matrix
$\bSTRkn$ has the same rank as the matrix
\begin{equation}
\eqMat \eq
\left(\begin{matrix}
e^{-3 \jmath \pi \sin \DOA_1} &  \cdots & e^{-3 \jmath \pi \sin \DOA_L} \\
e^{-2 \jmath \pi \sin \DOA_1} &  \cdots & e^{-2 \jmath \pi \sin \DOA_L} \\
\vdots &  \vdots & \vdots \\
e^{3 \jmath \pi \sin \DOA_1} &  \cdots & e^{3 \jmath \pi \sin \DOA_L} \\
\end{matrix} \right),
\end{equation}
where we only rearranged and deleted duplicated rows from $\bSTRkn$
to get $\eqMat$.
The matrix $\eqMat$ is a Vandermonde matrix with $7$ rows.
Consequently, $\RNK=7$ and $\Floor{\frac{\RNK}{2}} = 3$, i.e., up to 
$L=3$ sources can be identified assuming non-coherent processing in this example.
Thus, regarding identifiability non-coherent processing
is equivalent to coherent processing in  this scenario. 
Moreover, observe that where each subarray is able to identify one source locally (since each subarray  consists of 2 sensors \cite{wax1989unique}),
using non-coherent processing, the number of identifiable sources
is increased up to $L=3$ sources. 
This increase results from the fact 
that the three subarrays have different covariance lags.

\begin{figure}
\centering
\newcommand{\Antenna}[3]
{
\coordinate [] (A) at (#1,     		#2);
\coordinate [] (C) at (#1,     		#2+#3*0.25*3); 
\coordinate [] (D) at (#1-#3*0.5, 	#2+#3*0.25*5);
\coordinate [] (E) at (#1+#3*0.5,	#2+#3*0.25*5);
\draw[thick] (A)--($(C)+(0,0.02)$);
\fill (C)--(E)--(D)--cycle ;
}

\begin{tikzpicture}
\def\AntennaScale{0.3}

\foreach \x/\y/\d/\k in {0/0/0.5/1, 2/0.7/1.0/2, 4/-0.5/1.5/3}{
	\Antenna{\x}{\y}{\AntennaScale}
	\Antenna{\x+\d}{\y}{\AntennaScale}
	\draw[latex-latex, red] (\x, \y) -- (\x+\d, \y) node[below, pos=0.5, text=black] {$d_{\k}=\k$}; 
}

\end{tikzpicture}
\caption{Array composed of $K=3$ subarrays.}
\label{fig:example1}
\end{figure}

\subsection{Maximum Likelihood Estimator}
\label{sec:mle}
In this section, the MLE for DOA estimation using non-coherent processing is derived considering uncorrelated sources.

In the scenario considered in this work,
the PC receives the sample covariance matrices from the subarrays.
These matrices follow a Wishart distribution
\cite[p.~49]{schreier2010statistical}
with probability density function (pdf) 
\begin{equation}
 \prop( \MCOVs_k ) = 
   \frac{ \det{  N \MCOVs_k }^{N-M_k}}
   {\GammaFunc{M_k}{N} \det{ \MCOV_k } ^{N} }    
  \exp\left( - N\Tr{ \MCOV_k^{-1} \MCOVs_k } \right) 
\label{eq:wishart}
\end{equation}
where
$\GammaFunc{M_k}{N} = \pi^{M_k(M_k-1)/2} \prod_{i=1}^{M_k}\prod_{j=1}^{N-i} j$
and $\MCOV_k$ is given in (\ref{eq:Rk-uc}).
Ignoring the constant term in (\ref{eq:wishart}),
the negative log-likelihood function is written as 
\begin{equation}
 \LogLH{\MCOV_1 \mdots \MCOV_K} = 
 \sum_{k=1}^{K} N \left( \log \det{ \MCOV_k } + \Tr{ \MCOV_k^{-1} \MCOVs_k } \right).
 \label{eq:mle-uc}
\end{equation}
The function $\LogLH{\MCOV_1 \mdots \MCOV_K}$ 
is valid under the assumption of correlated sources as well as uncorrelated sources.
Where only the structure of the measurement covariance matrices    
$\MCOV_1 \mdots \MCOV_K$ depends on the source correlations.
For uncorrelated sources the measurement covariance matrix of the $k$th subarray
$\MCOV_k$
reduces to (\ref{eq:Rk-uc}), i.e., 
$\MCOV_k$ depends on the DOAs $\DOAs$, the source powers $\SCOVdv$, and the noise variance $\NCOV$.
Thus, the DOAs, the power of the sources, and the noise variance are estimated 
by solving the minimization problem
\begin{equation}
\begin{aligned} 
& \underset{\DOAs, \SCOVdv, \NCOV}{\min}  \LogLH{ \DOAs , \SCOVdv, \NCOV} \\
&{\rm s.t.} \quad  \SCOVdv > \zeros_{L}, \\
& \quad\quad  \NCOV > 0.
\end{aligned}
 \label{eq:doa-uc}
\end{equation} 
The function $\LogLH{ \DOAs , \SCOVdv, \NCOV}$  in (\ref{eq:doa-uc}) 
is nonconvex \cite{boyd2004convex}.
Therefore, starting from a feasible point, 
a local solution (local minimum) for (\ref{eq:doa-uc}) can be computed,
e.g., using the gradient descent method \cite{boyd2004convex}.

\subsection{The Cram\'er-Rao Bound (CRB)}
\label{sec:crb}
In \cite{stoica1995decentralized},
an expression for the CRB using non-coherent processing is derived
under the assumption $M_k > L$, for $k=1 \mdots K$. 
Note that when $M_k < L$, the Fisher information matrix (FIM)  
corresponding to the $k$th subarray,
denoted by $\Fim{}_k$, is rank deficient. 
Therefore  the expression of \cite{stoica1995decentralized}
is no longer valid.    
The FIM matrix for the non-coherent processing scenario
\begin{equation}
\Fim{}  = \sum_{k=1}^K \Fim{}_k
\label{eq:fim-overall} 
\end{equation}
is used to compute the CRB.   
Using (\ref{eq:fim-overall})
and following the steps of \cite{sheinvald1999direction,sheinvald1997achievable}, 
the CRB corresponding to the direction parameters $\DOAs$  
can be written as 
\begin{equation}
\Crb_{\DOAs} = \left(  
\CrbPartOne^H \left(
\MCovBig - \MCovBig \CrbPartTwo 
  \left(\CrbPartTwo^H \MCovBig \CrbPartTwo \right)^{-1} 
\CrbPartTwo^H \MCovBig
\right) \CrbPartOne \right)^{-1}.
\label{eq:crb-uc}
\end{equation}
where
\begin{equation}
\CrbPartOne \eq [\D{\MCOVv} {\DOAs^T}], \quad
\CrbPartTwo \eq [ \D{\MCOVv} {\SCOVdv^T}, \D{\MCOVv} {\NCOV}],
\label{eq:delta}
\end{equation}
are the matrices which represent the derivatives of $\MCOVv$ with respect to 
$\DOAs$, $\SCOVdv$, and $\NCOV$, respectively, 
\begin{equation}
\MCovBig  \eq \blkdiag \left( \MCovBig_1 ,\ldots, \MCovBig_K  \right),
\label{eq:BIGR-11}
\end{equation}
and $\MCovBig_k \eq N (\MCOV_k^{-T} \otimes  \MCOV_k^{-1} )$.
In the sequel,  we demonstrate the behaviour of the CRB at high SNR by simulation
and we analyze this behaviour.
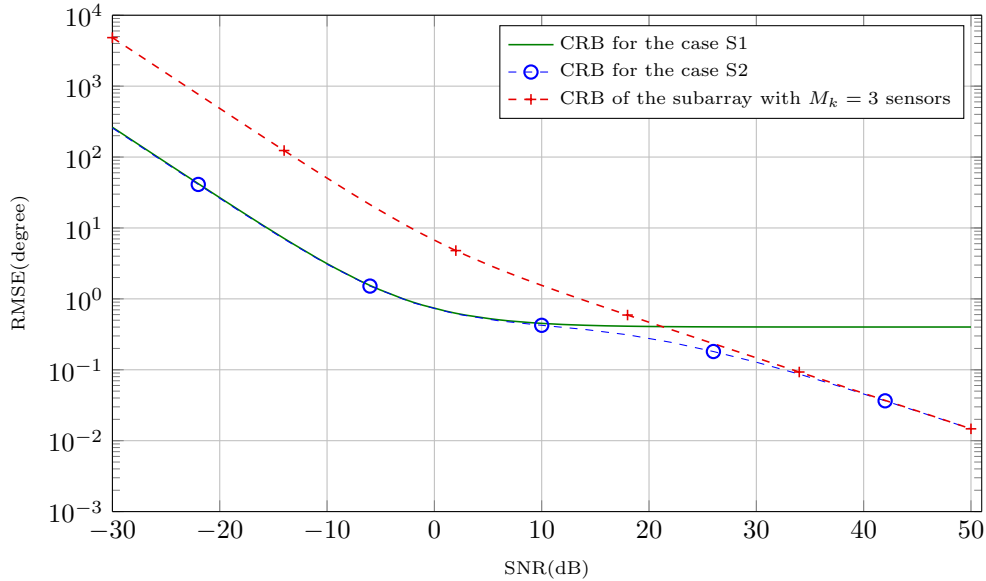
\begin{figure}

\definecolor{mycolor1}{rgb}{0.00000,0.49804,0.00000}%
\begin{tikzpicture}

\begin{axis}[%
width=\pFigW \columnwidth,
height=\pFigH  \columnwidth,
at={(1.352in,0.855556in)},
scale only axis,
separate axis lines,
every outer x axis line/.append style={black},
every x tick label/.append style={font=\color{black}},
xmin=-30.0000,
xmax=51.0000,
xmajorgrids,
xlabel={\pSnr},
every outer y axis line/.append style={black},
every y tick label/.append style={font=\color{black}},
ymode=log,
ymin=0.0010,
ymax=10000.0000,
yminorticks=true,
ymajorgrids,
ylabel={\pRmse},
legend style={legend cell align=left,align=left,fill=white}
]

\addplot [color=pCrbColor1,solid,line width=\pLineWS pt
, smooth
]
  table[row sep=crcr]{%
-30.0000	263.0516\\
-26.0000	105.0385\\
-22.0000	42.1318\\
-18.0000	17.0872\\
-14.0000	7.1142\\
-10.0000	3.1384\\
-6.0000	1.5458\\
-2.0000	0.8981\\
2.0000	0.6263\\
6.0000	0.5067\\
10.0000	0.4512\\
14.0000	0.4245\\
18.0000	0.4116\\
22.0000	0.4055\\
26.0000	0.4028\\
30.0000	0.4017\\
34.0000	0.4012\\
38.0000	0.4010\\
42.0000	0.4009\\
46.0000	0.4009\\
50.0000	0.4009\\
};
\addlegendentry{\pCrbAllWithLess};

\addplot [color=blue,dashed
 ,mark=o,mark options={line width = 0.8 pt, solid}
 ,mark repeat={4}, mark phase = 3, mark size=2.5 pt
]
  table[row sep=crcr]{%
-30.0000	256.5259\\
-26.0000	102.4398\\
-22.0000	41.0965\\
-18.0000	16.6739\\
-14.0000	6.9480\\
-10.0000	3.0697\\
-6.0000	1.5146\\
-2.0000	0.8808\\
2.0000	0.6126\\
6.0000	0.4903\\
10.0000	0.4237\\
14.0000	0.3708\\
18.0000	0.3116\\
22.0000	0.2446\\
26.0000	0.1809\\
30.0000	0.1279\\
34.0000	0.0868\\
38.0000	0.0569\\
42.0000	0.0365\\
46.0000	0.0232\\
50.0000	0.0147\\
};
\addlegendentry{\pCrbOneSubarrayCanIdentify};

\addplot [color=pCrbColor2,dashed,line width=\pLineW
 ,mark=+,mark options={line width = \pLineW, solid}
 ,mark repeat={4}, mark phase = 1, mark size=\pMarkSize
, smooth
]
  table[row sep=crcr]{%
-30.0000	4833.3076\\
-26.0000	1925.5257\\
-22.0000	767.9152\\
-18.0000	307.0580\\
-14.0000	123.5772\\
-10.0000	50.5075\\
-6.0000	21.3618\\
-2.0000	9.6453\\
2.0000	4.7947\\
6.0000	2.6317\\
10.0000	1.5506\\
14.0000	0.9493\\
18.0000	0.5915\\
22.0000	0.3714\\
26.0000	0.2338\\
30.0000	0.1474\\
34.0000	0.0930\\
38.0000	0.0587\\
42.0000	0.0370\\
46.0000	0.0234\\
50.0000	0.0147\\
};
\addlegendentry{\pCrbOneSubarray};

\end{axis}
\end{tikzpicture}%
\caption{The CRB for the cases where 1) none of the subarrays are able 
to identify the sources individually
2) one  subarray can identify the sources.
Also the CRB for the subarray with 3 sensors in case 2) is shown.}
\label{fig:crb} 
\end{figure} 

Consider the following two scenarios:
\begin{enumerate}
  \item[{\caseCone}] $M_1 = \cdots = M_{K} \leq L$, i.e., the FIM for each individual subarray is not invertible,
  whereas the overall FIM, defined in (\ref{eq:fim-overall}), is invertible. 
  \item[{\caseCtwo}] $M_{1} > L$ and $M_k \leq L$, for $k=2 \mdots K$, i.e., 
  the FIM of the first subarray $\Fim{}_1$ is invertible
whereas the FIM of the remaining subarrays, i.e.,  
  $\Fim{}_k$, for $k=2,\ldots,K$ are not invertible.
\end{enumerate}
In Fig.~\ref{fig:crb}, we display the CRB for $K = 12$ subarrays and 
$L = 2$ uncorrelated equal-power sources 
for two array configurations which represent the aforementioned scenarios {\caseCone} and {\caseCtwo}\footnote{%
For the details on the array geometry parameters 
please refer the array setup described in Section~\ref{sec:simulation}. }:
\begin{enumerate}
  \item $M_1 = \cdots = M_{K} = 2 = L$, which represents {\caseCone}. 
  \item $M_{1}=3 > L$ and $M_2 = \cdots = M_{K-1} = 2=L$, which represents {\caseCtwo}.
\end{enumerate}
Moreover, 
in Fig.~\ref{fig:crb},
we display 
the CRB of the first subarray with $M_{1}=3$ sensors.
It can be observed from Fig.~\ref{fig:crb}
that in the scenario {\caseCone}, the CRB does not approach zero as the SNR approaches infinity
rather it remains unchanged at high SNR (in Fig.~\ref{fig:crb},
the CRB remains almost unchanged for SNR above $15$ dB).  
In the scenario {\caseCtwo}, the CRB is almost identical to that of the scenario {\caseCone}
when the SNR is less than $15$ dB.
However, it continues to decrease for SNR larger than $15$ dB 
and the performance at high SNR in this case
is determined by the performance of the first subarray. 
Thus, at high SNR, DOA estimation
can be performed using only those subarrays which are able to
identify and estimate the DOAs individually, if such subarrays exist.
In \cite{lee1990robust}, the authors
suggested to include only subarrays
which can individually identify all the sources
in the DOA estimation algorithm. 
This approach is justified at high SNR, 
however, at low SNR using all the subarray yields 
the better estimation performance, 
as demonstrated by the CRB in Fig.~\ref{fig:crb}.  

In the following, we analyze the aforementioned behaviour of the CRB
at high SNR in the two scenarios {\caseCone} and {\caseCtwo}.
Thus, we consider $L$ uncorrelated equally-powered sources in the high SNR region,
i.e., $\SCOVds_1 = \cdots = \SCOVds_L = \SCOVds$, 
where $\SCOVds_1 \mdots \SCOVds_L$ are the power of the sources whose directions 
are denoted by $\DOA_1 \mdots \DOA_L$, respectively, and 
$\SCOVds \gg \NCOV$, refer to (\ref{eq:lambda}).
Let $\MCOVvi \eq  \ 
\SCOVds\left( \bSTRkn \ones{L}  + \frac{\NCOV}{\SCOVds} \IDv \right) \AtHighSNRs \approx \SCOVds \bSTRkn \ones{L}$
denotes the high SNR approximation of the vectorized
covariance matrices.
Consequently, the derivative matrices  $\CrbPartOne$ and $\CrbPartTwo$
in (\ref{eq:delta})
reduce to  
$\CrbPartOnei = \SCOVds [\D{(\bSTRkn \ones{L})}{\DOAs^T} ]$ 
and 
$\CrbPartTwoi = [\bSTRkn \ones{L}, \IDv]$.
Similarly, we denote 
$\MCovBig$ at high SNR by
$\MCOVi  \approx \SCOVds^{-2} N \MCOVii$,
where  
$\MCOVii = \blkdiag \left( \MCOVii_1 ,\ldots, \MCOVii_K  \right)$
and $\MCOVii_k \eq (\STRkn_k \STRkn_k^H)^{-T} \otimes  (\STRkn_k \STRkn_k^H)_k^{-1}$.
Substituting $\CrbPartOnei$, $\CrbPartTwoi$, and $\MCOVi$ in (\ref{eq:crb-uc})
the CRB in the high SNR region reduces to
\begin{equation}
\Crbi  \approx  N
[\D{(\bSTRkn \ones{L})}{\DOAs^T} ]^H
\;\MCOVii\;
[\D{(\bSTRkn \ones{L})}{\DOAs^T} ].
\label{eq:crb-infty} 
\end{equation}
Interestingly,
we observe from (\ref{eq:crb-infty}) that
 at high SNR,
 the expression for  $\Crbi$  depend neither on $\SCOVds$
 nor on $\NCOV$ 
 but only on the DOAs $\DOA_1\mdots\DOA_L$.
Next, let us consider how the expression for  $\Crbi$ 
changes in the two scenarios {\caseCone} and {\caseCtwo}.
Let $\RNKi_k$ denote the rank of the matrix
$\MCOVii_k^{-1}$. 
Since the rank of the Kronecker product is the product of the ranks 
 of its operand matrices \cite{abadir2005matrix},
the rank $\RNKi_k$, for $k=1,\ldots,K$, takes the value $\RNKi_k=M_k^2$
in both scenarios {\caseCone} and {\caseCtwo}\footnote{%
 Using the well-known inversion identity
 $(\pmb{A} \otimes \pmb{B})^{-1}=\pmb{A}^{-1} \otimes \pmb{B}^{-1}$ 
 \cite{abadir2005matrix}.}.
Thus, the following behaviour of the block diagonal matrix $\MCOVii^{-1}$
is observable:  
\begin{itemize}
  \item In the scenario {\caseCone}, $\MCOVii^{-1}$ is full rank.
  \item In the scenario {\caseCtwo},  $\MCOVii^{-1}$ is rank deficient.
  More precisely, the first block of $\MCOVii^{-1}$, 
  which corresponds to the first subarray is rank deficient.
\end{itemize}
Consequently,
in the scenario {\caseCone},
the matrix $\MCOVii$ %
has finite entries (and eigenvalues) leading to a finite non-zero CRB.
Whereas, in the scenario {\caseCtwo},
the matrix $\MCOVii$  has infinitely large eigenvalues
which asymptotically drive the CRB to zero). 
Moreover, in the scenario {\caseCtwo}, $\Crbi$ in (\ref{eq:crb-infty}) can be approximated by 
$
 N
[\D{(\bSTRkn_1 \ones{L})}{\DOAs^T} ]^H
\;\MCOVii_1\;
[\D{(\bSTRkn_1 \ones{L})}{\DOAs^T}]
$
since the entries of $\MCOVii_2,\ldots,\MCOVii_K$ are negligible compared to the entries of $\MCOVii_1$.
Which means that in the scenario {\caseCtwo},  at high SNR,
the CRB of the whole array can be approximated by the CRB of the first subarray.
We remark that a behaviour of the CRB similar to that of scenario {\caseCone} 
at high SNR has been 
observed in \cite{sheinvald1999direction,sheinvald1997achievable}
for DOA estimation using fewer receivers
and 
``it is shown to be typical  in scenarios where a signal  subspace  is  nonexistent''.
However, in \cite{sheinvald1999direction,sheinvald1997achievable} 
the scenario {\caseCtwo} has not been considered.
Moreover, in \cite[Fig.~1]{abramovich1998positive},
a similar behaviour to the scenario {\caseCone} is observed in DOA estimation using
fully augmentable sparse linear arrays when the number of sources is larger than the number of 
the sensors in the array but smaller than the available covariance lags.

Regarding the number of samples $N$, we point out that 
the CRB approaches zero in both scenarios {\caseCone} and {\caseCtwo} when $N$ approaches infinity,
as it can be observed from (\ref{eq:crb-infty}).

\subsection{DOA Estimation Using Sparse Signal Representation}
\label{sec:ssr-uc}
Sparse signal representation (SSR)  
\cite{tibshirani1996regression,donoho2006stable,donoho2001uncertainty,donoho2008fast,candes2008introduction}
has recently attracted much attention in DOA estimation applications,
see \cite{malioutov2005sparse,hyder2010direction,steffens2014direction,yin2011direction,atashbar2012direction}.
One important advantage of 
SSR  is that it performs well in the low sample size regime.
Furthermore, using the norm $\ell_1$ relaxation the SSR can be cast as a convex optimization problem. 
So far, the focus of DOA estimation using SSR 
has been in the context of coherent processing  
\cite{malioutov2005sparse,hyder2010direction,steffens2014direction,yin2011direction,atashbar2012direction},
however, to the best of our knowledge the SSR approach 
has not yet been applied for non-coherent processing based DOA estimation.
In this section, 
we formulate the DOA estimation problem in the case of uncorrelated sources as a SSR problem,
which can be solved using convex optimization algorithms, see \cite{boyd2004convex,cvx}.

For coherent processing using fully calibrated array,
covariance based SSR approaches for deterministic and stochastic source models are introduced in
\cite{steffens2016compact} and \cite{stoica2011spice}, respectively.  
Since a stochastic source model is assumed in this paper, 
we extend the approach of \cite{stoica2011spice},
referred to as SParse Iterative Covariance-based approach (SPICE),
to non-coherent processing using partly calibrated arrays\footnote{%
The extension of \cite{steffens2016compact} to non-coherent processing using partly calibrated arrays is similar to that
of \cite{stoica2011spice}.
}.

Let $\GRD$ be the vector of length $\GRDl$ obtained by sampling the
field-of-view in $\GRDl \gg L$  angular directions 
\begin{equation}
\GRD = [\pGRD_1 \mdots \pGRD_G]^T.
\label{eq:grid}
\end{equation}
Then, the SPICE optimization problem \cite[Equation~(20)]{stoica2011spice} 
for the considered non-coherent processing scenario is written as 
\begin{subequations}
\begin{equation}
\begin{aligned}
& \underset{\gSCOVdv, \NCOV}{\min} \sum_{k=1}^K \Tr{\gMCOV^{-1}_k \MCOVs_k} \\
\end{aligned}
\end{equation}
\begin{equation}
\begin{aligned}
&{\rm s.t.} \quad  \gSCOVdv \geq \zeros_{G}, \;  \NCOV \geq 0, \\
\end{aligned}
\end{equation}
\begin{equation}
\begin{aligned}
& \quad\quad \sum_{g=1}^G \SpiceW_g  \gSCOVds_g + \SpiceWNos \NCOV= 1
\end{aligned}
\label{eq:spice-sparse}
\end{equation}%
\label{eq:spice}%
\end{subequations}%
where
$\gMCOV_k = \gSTRkn_k  \gSCOVd \gSTRkn^H+\NCOV\ID_{M_k}$ and
the $\M_k \times \GRDl$
overcomplete dictionary
$\gSTRkn_k$ is defined as 
\begin{equation}
\gSTRkn_k \eq [\STRknv_k(\gDOA_1),\ldots,\STRknv_k(\gDOA_{\GRDl})].
\label{eq:gSTRkn} 
\end{equation}
The diagonal matrix $\gSCOVd$ is a sparse matrix whose diagonal elements, denoted as $\gSCOVdv$,
correspond to the powers of the sources at directions $\gDOA$.
The weights in (\ref{eq:spice-sparse}) are defined as
\begin{equation}
\SpiceW_g = \frac{1}{M} \sum_{k=1}^K \STRknv^H_k(\gDOA_g) \MCOVs^{-1}_k \STRknv_k(\gDOA_g),
\end{equation}
and
\begin{equation}
\SpiceWNos = \frac{1}{M} \sum_{k=1}^K \Tr{\MCOVs^{-1}_k}.
\end{equation}
In \cite{stoica2011spice}, it has been pointed out that the constraint (\ref{eq:spice-sparse}) is
a weighted $\ell_1$ norm and thus is expected to induce sparsity. 
Note that in contrast to other $\ell_1$ norm based DOA estimation approaches, %
the SPICE approach does not require the configuration of a sparsity regularization parameter. 
Problem (\ref{eq:spice}) is positive semi-definite \cite{stoica2011spice} thus can be solved using,
e.g., cvx \cite{cvx}\footnote{%
Problem (\ref{eq:spice}) can be cast as second order cone program (SOCP)
and it can be extended to the case where the sensor noise variance
are not identical at all sensors, see \cite{stoica2011spice}.
}.

Note that using SSR, the DOA estimation problem is reduced to
the identification of the non-zero elements
in the estimated sparse vector $\gSCOVdvs$.
These non-zero elements are referred to as the support set of
$\gSCOVdvs$.
The DOA estimates are the grid points, i.e., the elements of $\GRD$, 
which correspond to the $L$
largest peaks of $\gSCOVdvs$. 

\section{Extension to Correlated Sources}
\label{sec:extention}
In the previous section, 
we assumed that the sources impinging onto the system of subarrays
are uncorrelated. In this case,
the source covariance matrix satisfies (\ref{eq:Pk-uc})
and the measurement covariance matrix
reduces to (\ref{eq:Rk-uc}).
However, by dropping the assumption of uncorrelated sources, 
(\ref{eq:Pk-uc}) is no longer valid since the
matrix $\SCOVo$, defined in (\ref{eq:lambda-f}), is non-zero.
In this section, we extend the MLE, the SSR approach, and the CRB
which have been introduced in the previous section 
for the case of uncorrelated sources to the case of correlated sources.

\subsection{The MLE and SSR approaches for Correlated Sources}

The derivation of the MLE in the correlated sources case
is similar to the case of uncorrelated sources,
which is introduced in Section~\ref{sec:mle}. 
However, in this case, 
the off-diagonal entries of the source covariance matrix $\SCOV$
are non-zero.
Consequently,
the property (\ref{eq:Pk-uc}) does not hold.
Thus, in contrast to (\ref{eq:Rk-uc}),  
the measurement covariance matrix $\MCOV_k$
for correlated sources, defined in (\ref{eq:Rk}),
depends on the unknown displacements between the subarrays, 
represented by the matrix $\PHASE_k$
for $k=1,\ldots,K$.
The negative log-likelihood in the presence of correlated sources,
denoted as $\LogLH{ \DOAs , \SCOV, \NCOV, \PHASE_2, \ldots,\PHASE_K}$,
is defined in (\ref{eq:mle-uc}).
However, for $\LogLH{ \DOAs , \SCOV, \NCOV, \PHASE_2, \ldots,\PHASE_K}$
the covariance matrix as defined in (\ref{eq:Rk}) is used since (\ref{eq:Rk-uc})
is only valid for uncorrelated source.
The DOAs can be estimated from the minimization problem
\begin{equation}
\begin{aligned}
&\underset{\DOAs, \SCOV, \NCOV, \PHASE_1,\ldots,\PHASE_K}{\min}  
  \LogLH{ \DOAs , \SCOV, \NCOV, \PHASE_1,\ldots,\PHASE_K} \\
&{\rm s.t.} \quad  \SCOV \succeq 0, \\
& \quad\quad  \NCOV > 0,
\end{aligned}
 \label{eq:doa-C}
\end{equation} 
where $\SCOV \succeq 0$ 
denotes that the matrix $\SCOV$ is positive semidefinite.
Similar to the case of uncorrelated sources,
the optimization problem (\ref{eq:doa-C}) is nonconvex.
Therefore, starting from a feasible point,
a local solution (local minimum) for (\ref{eq:doa-C}) can be computed,
e.g., using the gradient descent method \cite{boyd2004convex}.

We remark that the SSR approach 
introduced in Section~\ref{sec:ssr-uc} for uncorrelated sources
is robust to the assumption of uncorrelated sources.
This robustness results from the fact that the SPICE method,
which we base our SSR approach on, is robust to the assumption of uncorrelated sources
\cite[Section~II]{stoica2011spice}.
Consequently, the SSR approach as introduced in  Section~\ref{sec:ssr-uc}
for uncorrelated sources is applicable in the case of correlated sources.

\subsection{The CRB for Correlated Sources}
\label{sec:corr-crb}
The derivation of the CRB for the case of correlated sources is similar to 
the case of uncorrelated sources. 
The CRB for the case of correlated sources is written as in (\ref{eq:crb-uc})
with  $\CrbPartTwo$ defined as
\begin{equation}
\CrbPartTwo \eq [ \D{\MCOVv} {\SCOVv^T}, \D{\MCOVv} {\NCOV},
\D{\MCOVv} {\DISP_2^T} 
\mdots
\D{\MCOVv} {\DISP_K^T}
],
\end{equation}
where $\SCOVv$ is a real vector of length $L^2$ which represents 
the unknown parameters
of the source covariance matrix.
More precisely $\SCOVv$ contains the diagonal of $\SCOV$ 
and the real and imaginary parts of the upper diagonal of the matrix $\SCOV$. 
In the following, %
we demonstrate the behaviour of the CRB at high SNR by simulation
and we carry out an asymptotic (for high SNR) analysis  of this behaviour.
\subsubsection*{Example}

In Fig.~\ref{fig:crb-corr}, we display the CRB for $K = 12$ subarrays each consists of two sensors
and $L = 2$ equally-powered correlated sources,\footnote{%
The same configuration as in the case {\caseCone} in Section~\ref{sec:id}, expect for the source correlation, is used.   
For the details on the array geometry parameters 
please refer the array setup described in Section~\ref{sec:simulation}. } i.e., 
the matrix $\Fim{}_k$, for $k=1,\ldots,K$, are not invertible.
Thus, the source covariance matrix is 
\begin{equation}
\SCOV = \SCOVds \; \MCORR
\end{equation}
where $\MCORR \eq\left[ \begin{matrix} 1 & \corr \\ \corr^* & 1 \end{matrix} \right]$, 
the correlation factor $\corr$ satisfies  $0 \leq |\corr| \leq 1$,
and $\SCOVds$ is the power of each of the two sources.
In Fig.~\ref{fig:crb-corr}, the CRB is displayed 
for correlation factor $\corr$ of  $0, 0.3, 0.6,$ and $1$, 
where the latter correlation value indicates coherent sources.
Observe in Fig.~\ref{fig:crb-corr} that the CRB of the estimated DOAs 
for correlated sources behaves similar to the uncorrelated sources case of Fig.~\ref{fig:crb}. 
However, the CRB decreases with the increase of $\corr$. 
Interestingly, for coherent sources, i.e., for $\corr=1$,
the CRB approaches zero at high SNR, which is in exact contrast to the case of uncorrelated or
 partly correlated sources where the CRB does not vanish with SNR.
\begin{figure}
\begin{tikzpicture}

\begin{axis}[%
width=\pFigW \columnwidth,
height=\pFigH  \columnwidth,
at={(0.979333in,0.628222in)},
scale only axis,
separate axis lines,
every outer x axis line/.append style={black},
every x tick label/.append style={font=\color{black}},
xmin=-8.7396,
xmax=35.9896,
xmajorgrids,
every outer y axis line/.append style={black},
every y tick label/.append style={font=\color{black}},
ymode=log,
ymin=0.0044,
ymax=2.2695,
yminorticks=false,
ymajorgrids,
xlabel={\pSnr},
ylabel={\pRmse},
legend style={at={(0.01,0.01)},anchor=south west,legend cell align=left,align=left,fill=white}
]

\addplot [color=mycolor2,solid,line width=\pLineW,mark options={line width=\pLineW}]
  table[row sep=crcr]{%
-30.0000	263.0516\\
-26.0000	105.0385\\
-22.0000	42.1318\\
-18.0000	17.0872\\
-14.0000	7.1142\\
-10.0000	3.1384\\
-6.0000	1.5458\\
-2.0000	0.8981\\
2.0000	0.6263\\
6.0000	0.5067\\
10.0000	0.4512\\
14.0000	0.4245\\
18.0000	0.4116\\
22.0000	0.4055\\
26.0000	0.4028\\
30.0000	0.4017\\
34.0000	0.4012\\
38.0000	0.4010\\
};
\addlegendentry{{\scriptsize CRB $\corr=0$}};

\addplot [color=mycolor1,solid,line width=\pLineW,mark=o,mark options={solid,draw=mycolor1,line width=\pLineW}]
  table[row sep=crcr]{%
-30.0000	323.7069\\
-28.0000	204.5035\\
-26.0000	129.2907\\
-24.0000	81.8340\\
-22.0000	51.8896\\
-20.0000	32.9941\\
-18.0000	21.0691\\
-16.0000	13.5407\\
-14.0000	8.7844\\
-12.0000	5.7746\\
-10.0000	3.8638\\
-8.0000	2.6436\\
-6.0000	1.8576\\
-4.0000	1.3455\\
-2.0000	1.0083\\
0.0000	0.7848\\
2.0000	0.6362\\
4.0000	0.5374\\
6.0000	0.4713\\
8.0000	0.4267\\
10.0000	0.3962\\
12.0000	0.3750\\
14.0000	0.3603\\
16.0000	0.3501\\
18.0000	0.3432\\
20.0000	0.3385\\
22.0000	0.3355\\
24.0000	0.3334\\
26.0000	0.3321\\
28.0000	0.3313\\
30.0000	0.3308\\
32.0000	0.3304\\
34.0000	0.3302\\
36.0000	0.3301\\
38.0000	0.3300\\
40.0000	0.3299\\
};
\addlegendentry{{\scriptsize CRB $\corr=0.3$}};

\addplot [color=blue,solid,line width=\pLineW,mark=x,mark options={solid,draw=blue,line width=\pLineW}]
  table[row sep=crcr]{%
-30.0000	283.0889\\
-28.0000	178.9012\\
-26.0000	113.1626\\
-24.0000	71.6831\\
-22.0000	45.5096\\
-20.0000	28.9924\\
-18.0000	18.5664\\
-16.0000	11.9816\\
-14.0000	7.8171\\
-12.0000	5.1756\\
-10.0000	3.4902\\
-8.0000	2.4030\\
-6.0000	1.6902\\
-4.0000	1.2136\\
-2.0000	0.8897\\
0.0000	0.6683\\
2.0000	0.5179\\
4.0000	0.4168\\
6.0000	0.3494\\
8.0000	0.3046\\
10.0000	0.2745\\
12.0000	0.2539\\
14.0000	0.2396\\
16.0000	0.2297\\
18.0000	0.2228\\
20.0000	0.2182\\
22.0000	0.2151\\
24.0000	0.2130\\
26.0000	0.2117\\
28.0000	0.2108\\
30.0000	0.2102\\
32.0000	0.2099\\
34.0000	0.2096\\
36.0000	0.2095\\
38.0000	0.2094\\
40.0000	0.2093\\
};
\addlegendentry{{\scriptsize CRB $\corr=0.6$}};

\addplot [color=red,solid,line width=\pLineW,mark=triangle,mark options={solid,rotate=180,draw=red,line width=\pLineW}]
  table[row sep=crcr]{%
-30.0000	216.5919\\
-26.0000	86.6521\\
-22.0000	34.9175\\
-18.0000	14.3104\\
-14.0000	6.0808\\
-10.0000	2.7522\\
-6.0000	1.3425\\
-2.0000	0.6854\\
2.0000	0.3543\\
6.0000	0.1890\\
10.0000	0.1066\\
14.0000	0.0629\\
18.0000	0.0382\\
22.0000	0.0236\\
26.0000	0.0147\\
30.0000	0.0092\\
34.0000	0.0058\\
38.0000	0.0037\\
};
\addlegendentry{{\scriptsize CRB $\corr=1$}};

\end{axis}
\end{tikzpicture}%
\caption{The CRB in the case of correlated sources for different source correlation $\corr$.}
\label{fig:crb-corr} 
\end{figure}
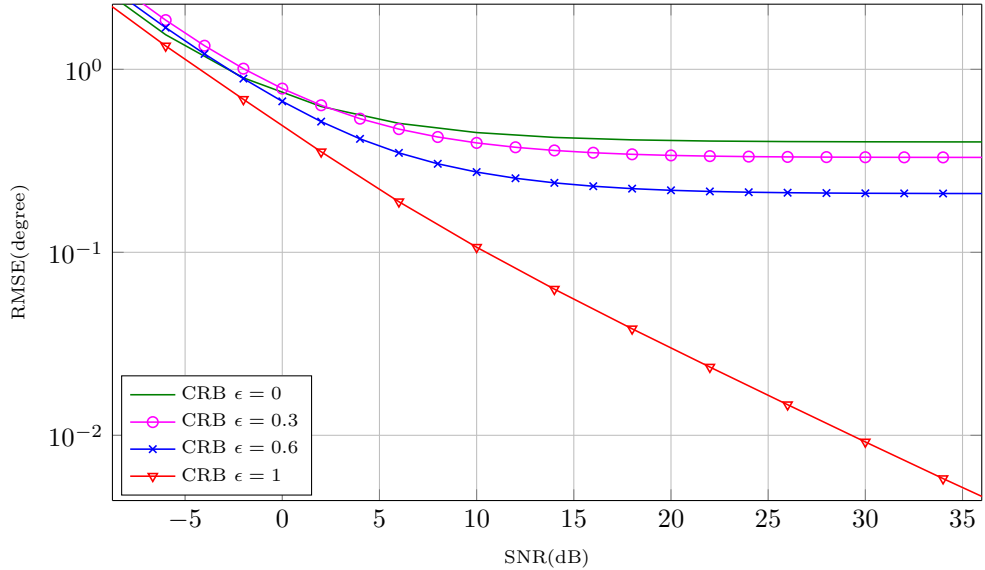 

In the sequel, the aforementioned behaviour of the CRB
is analyzed asymptotically for high SNR values.
Following the steps of Section~\ref{sec:crb}
for the case of uncorrelated sources, 
in the correlated source case the CRB at high SNR %
is written as
\begin{equation}
\Crbi %
 \approx N [\D{\CrbPartOneic}{\DOAs^T}]^H 
 \;\MCOVic\; 
[\D{\CrbPartOneic}{\DOAs^T}],
\label{eq:crb-infty-C} 
\end{equation}
where $\CrbPartOneic \eq [\CrbPartOneic_1^T,\ldots,\CrbPartOneic_K^T]^T$,
$\CrbPartOneic_k \eq \vec{}(\STRkn_k \PHASE_k \MCORR \PHASE_k^H  \STRkn_k^H)$,
$\MCOVic \eq \blkdiag \left( \MCOVic_1,\ldots,\MCOVic_K \right)$,
and
$\MCOVic_k \eq (\STRkn_k \PHASE_k \MCORR \PHASE_k^H \STRkn_k^H)^{-T}\otimes 
(\STRkn_k \PHASE_k \MCORR \PHASE_k^H \STRkn_k^H)^{-1}$.
Note that $\Crbi$ depends neither on $\SCOVds$
nor on $\NCOV$.
Thus, based on the rank of the matrix $\MCOVic_k^{-1}$, denoted as $\RNKi_k$,
the following two cases are distinguished:
\begin{itemize}
  \item [1)] The case when $|\corr| < 1$ in which $\RNKi_k=M_k^2$, for $k=1 \mdots K$, 
      consequently, the matrix $\MCOVic^{-1}$ is full rank, and the CRB does not vanish at high SNR.
  \item [2)] The case when $|\corr| = 1$ in which $\RNKi_1=\cdots\RNKi_K=1$,
  consequently, the matrix $\MCOVic^{-1}$ is rank deficient and asymptotically 
  drives the CRB to zero at high SNR.
\end{itemize}

\section{Simulation Results}
\label{sec:simulation}

\begin{figure*}
\subfloat[]{ 
\definecolor{mycolor1}{rgb}{1.00000,0.00000,1.00000}%
\begin{tikzpicture}

\begin{axis}[%
width=\pFigW \columnwidth,
height=\pFigH  \columnwidth,
scale only axis,
separate axis lines,
every outer x axis line/.append style={black},
every x tick label/.append style={font=\color{black}},
xmin=-30.0000,
xmax=40.0000,
xmajorgrids,
every outer y axis line/.append style={black},
every y tick label/.append style={font=\color{black}},
ymode=log,
ymin=0.1000,
ymax=1000.0000,
yminorticks=true,
ylabel={\pRmse},
xlabel={\pSnr},
ymajorgrids,
yminorgrids,
legend style={legend cell align=left,align=left,fill=white}
]

\addplot [color=blue,loosely dashed,line width=1.0pt
,mark=o,mark options={line width=0.6pt, solid}
, mark repeat={2}]
  table[row sep=crcr]{%
-30.0000	40.4993\\
-25.0000	38.2837\\
-20.0000	36.3803\\
-15.0000	26.2840\\
-10.0000	7.1679\\
-8.0000	5.2959\\
-6.0000	1.8170\\
-4.0000	1.1989\\
-2.0000	1.0983\\
0.0000	0.7878\\
2.0000	0.6832\\
4.0000	0.6566\\
6.0000	0.6446\\
8.0000	0.5965\\
10.0000	0.5978\\
12.0000	0.5776\\
14.0000	0.5712\\
16.0000	0.5406\\
18.0000	0.5698\\
20.0000	0.4927\\
22.0000	0.5253\\
24.0000	0.5064\\
26.0000	0.4330\\
28.0000	0.4704\\
30.0000	0.4698\\
32.0000	0.4698\\
34.0000	0.4518\\
36.0000	0.4752\\
38.0000	0.4895\\
40.0000	0.4615\\
};
\addlegendentry{\pSPICE};

\addplot [color=pMaxLikelihoodColor
, dotted
,mark=+,mark options={line width=1.0, solid}
, mark size= \pMarkSize pt
,mark repeat={2}
, mark phase=2
]
  table[row sep=crcr]{%
-30.0000	40.5002\\
-25.0000	38.4566\\
-20.0000	36.3791\\
-15.0000	26.1423\\
-10.0000	7.0641\\
-8.0000	5.2959\\
-6.0000	1.8170\\
-4.0000	1.1989\\
-2.0000	1.0983\\
0.0000	0.7878\\
2.0000	0.5404\\
4.0000	0.4865\\
6.0000	0.4854\\
8.0000	0.4394\\
10.0000	0.4150\\
12.0000	0.4471\\
14.0000	0.4799\\
16.0000	0.4668\\
18.0000	0.4952\\
20.0000	0.4118\\
22.0000	0.4606\\
24.0000	0.4680\\
26.0000	0.3537\\
28.0000	0.4129\\
30.0000	0.4155\\
32.0000	0.4274\\
34.0000	0.3989\\
36.0000	0.4154\\
38.0000	0.4258\\
40.0000	0.4218\\
};
\addlegendentry{\pMaxLikelihood};

\addplot [color=pCrbColor1,solid,line width=\pLineWS pt
, smooth
]
  table[row sep=crcr]{%
-30.0000	263.0516\\
-26.0000	105.0385\\
-22.0000	42.1318\\
-18.0000	17.0872\\
-14.0000	7.1142\\ 
-10.0000	3.1384\\
-6.0000		1.5458\\
-2.0000		0.8981\\
2.0000		0.6263\\
6.0000		0.5067\\
10.0000		0.4512\\
14.0000		0.4245\\
18.0000		0.4116\\
22.0000		0.4055\\
26.0000		0.4028\\
30.0000		0.4017\\
34.0000		0.4012\\
38.0000		0.4010\\
};
\addlegendentry{\pCrbUC};

\end{axis}
\end{tikzpicture}%
\label{fig:snr-uc} }
\hfill
\subfloat[]{ 
\definecolor{mycolor1}{rgb}{1.00000,0.00000,1.00000}%

\begin{tikzpicture}

\begin{axis}[%
width=\pFigW \columnwidth,
height=\pFigH  \columnwidth,
scale only axis,
separate axis lines,
every outer x axis line/.append style={black},
every x tick label/.append style={font=\pTickSize\color{black}},
xmin=-30.000,
xmax=40.000,
xmajorgrids,
xlabel={\pSnr},
every outer y axis line/.append style={black},
every y tick label/.append style={font=\pTickSize\color{black}},
ymin=0.000,
ymax=100.000,
ymajorgrids,
ylabel={\pResolution},
ylabel style={yshift=-10pt},
legend style={at={(0.98,0.38)}, legend cell align=left,align=left,fill=white}
]

\addplot [color=blue,loosely dashed,line width=1.0pt
,mark=o,mark options={line width=0.6pt, solid}
, mark repeat={2}]
  table[row sep=crcr]{%
-30.0000	2.0000\\
-25.0000	3.0000\\
-20.0000	10.0000\\
-15.0000	46.0000\\
-10.0000	96.0000\\
-8.0000	98.0000\\
-6.0000	100.0000\\
-4.0000	100.0000\\
-2.0000	100.0000\\
0.0000	100.0000\\
2.0000	100.0000\\
4.0000	100.0000\\
6.0000	100.0000\\
8.0000	100.0000\\
10.0000	100.0000\\
12.0000	100.0000\\
14.0000	100.0000\\
16.0000	100.0000\\
18.0000	100.0000\\
20.0000	100.0000\\
22.0000	100.0000\\
24.0000	100.0000\\
26.0000	100.0000\\
28.0000	100.0000\\
30.0000	100.0000\\
32.0000	100.0000\\
34.0000	100.0000\\
36.0000	100.0000\\
38.0000	100.0000\\
40.0000	100.0000\\
};
\addlegendentry{\pSPICE};

\addplot [color=pMaxLikelihoodColor
, dotted
,mark=+,mark options={line width=1.0, solid}
, mark size= \pMarkSize pt
,mark repeat={2}
, mark phase=2
]
  table[row sep=crcr]{%
-30.0000	2.0000\\
-25.0000	4.0000\\
-20.0000	10.0000\\
-15.0000	46.0000\\
-10.0000	96.0000\\
-8.0000	98.0000\\
-6.0000	100.0000\\
-4.0000	100.0000\\
-2.0000	100.0000\\
0.0000	100.0000\\
2.0000	100.0000\\
4.0000	100.0000\\
6.0000	100.0000\\
8.0000	100.0000\\
10.0000	100.0000\\
12.0000	100.0000\\
14.0000	100.0000\\
16.0000	100.0000\\
18.0000	100.0000\\
20.0000	100.0000\\
22.0000	100.0000\\
24.0000	100.0000\\
26.0000	100.0000\\
28.0000	100.0000\\
30.0000	100.0000\\
32.0000	100.0000\\
34.0000	100.0000\\
36.0000	100.0000\\
38.0000	100.0000\\
40.0000	100.0000\\
};
\addlegendentry{\pMaxLikelihood};

\end{axis}
\end{tikzpicture}%
 \label{fig:snr-pd-uc} }
 \caption{\small DOA estimation performance, assuming uncorrelated sources, plotted against {\snr}   
 for a fixed number of samples $N=50$: 
 \protect\subref{fig:snr-uc} the {\rmse} of the proposed DOA estimation methods
 averaged over $100$ realizations,
 \protect\subref{fig:snr-pd-uc}
 the resolution percentage of the proposed DOA estimation methods averaged over $100$ realizations.}
 \end{figure*}
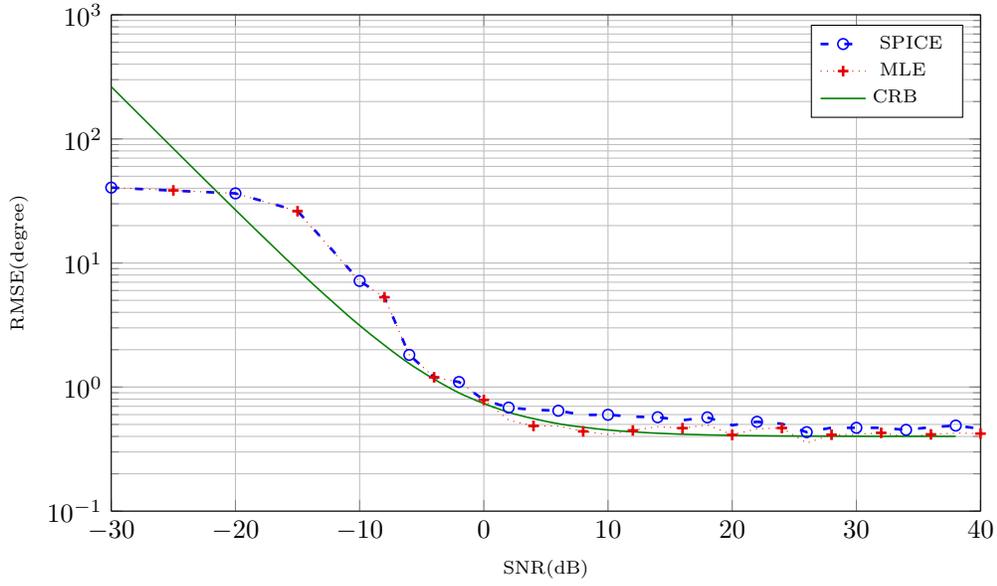
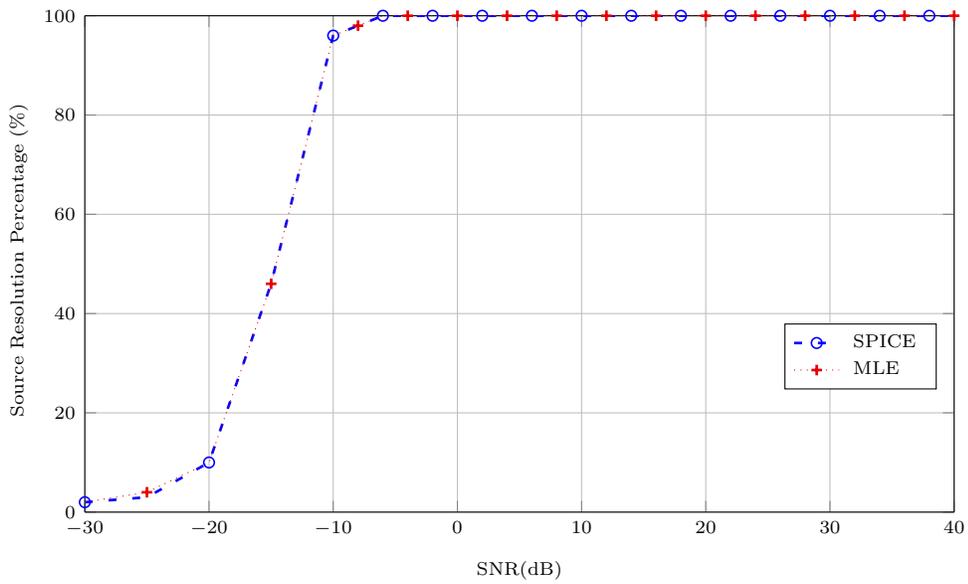
\begin{figure*}
\subfloat[]{ %
\definecolor{mycolor1}{rgb}{1.00000,0.00000,1.00000}%
\begin{tikzpicture}

\begin{axis}[%
width=\pFigW \columnwidth,
height=\pFigH  \columnwidth,
scale only axis,
separate axis lines,
every outer x axis line/.append style={black},
every x tick label/.append style={font=\pTickSize\color{black}},
scaled x ticks = false,
xmin=-2.748,
xmax=800.386,
xlabel={\pSamples},
xmajorgrids,
every outer y axis line/.append style={black},
every y tick label/.append style={font=\pTickSize\color{black}},
ymode=log,
ymin=0.010,
ymax=100.218,
yminorticks=true,
ylabel={\pRmse},
ymajorgrids,
yminorgrids,
ylabel style={yshift=-10pt},
legend style={at={(0.6,0.60)},anchor=south west,legend cell align=left,align=left,fill=white}
]
\addplot [color=blue,loosely dashed,line width=\pLineW
,mark repeat={2}
,mark phase={1}
,mark=o,mark options={line width=\pLineW, solid}]
  table[row sep=crcr]{%
5.0000		9.5044\\
10.0000		1.8990\\
20.0000		1.5060\\
60.0000		0.5765\\
70.0000		0.6100\\
80.0000		0.5558\\
90.0000		0.5375\\
100.0000	0.5277\\
200.0000	0.5008\\
300.0000	0.4555\\
400.0000	0.3922\\
500.0000	0.4297\\
600.0000	0.4047\\
700.0000	0.4000\\
800.0000	0.3991\\
900.0000	0.3834\\
1000.0000	0.3972\\
};
\addlegendentry{\pSPICE};

\addplot [color=pMaxLikelihoodColor
, dashed
, line width= \pLineW
,mark=+,mark options={line width=\pLineW, solid}
, mark size= \pMarkSize
,mark repeat={2}
,mark phase={2}
]
  table[row sep=crcr]{%
5.0000	9.5105\\
10.0000	1.7437\\
20.0000	1.4321\\
40.0000		0.9220\\
50.0000		0.8564\\
60.0000		0.7880\\
80.0000		0.5728\\
90.0000		0.5251\\
100.0000	0.4605\\
200.0000	0.4025\\
300.0000	0.3301\\
400.0000	0.2725\\
500.0000	0.3124\\
600.0000	0.2858\\
700.0000	0.2712\\
800.0000	0.2518\\
900.0000	0.2422\\
1000.0000	0.2006\\
};
\addlegendentry{\pMaxLikelihood};

\addplot [color=pCrbColor1,solid,line width=\pLineWS
, smooth
]
  table[row sep=crcr]{%
5.0000	2.8402\\ 
10.0000	2.0083\\
20.0000	1.4201\\
30.0000	1.1595\\
40.0000	1.0042\\
50.0000	0.8981\\
60.0000	0.8199\\
70.0000	0.7591\\
80.0000	0.7100\\
90.0000	0.6694\\
100.0000	0.6351\\
200.0000	0.4491\\
300.0000	0.3667\\
400.0000	0.3175\\
500.0000	0.2840\\
600.0000	0.2593\\
700.0000	0.2400\\
800.0000	0.2245\\
900.0000	0.2117\\
1000.0000	0.2008\\
};
\addlegendentry{\pCrbUC};

\end{axis}
\end{tikzpicture}
\label{fig:n-uc}}
\hfill
\subfloat[]{ 
\begin{tikzpicture}

\begin{axis}[%
width=\pFigW \columnwidth,
height=\pFigH  \columnwidth,
scale only axis,
separate axis lines,
every outer x axis line/.append style={black},
every x tick label/.append style={font=\color{black}},
xmin=0.000,
xmax=200.000,
xmajorgrids,
xlabel={\pSamples},
every outer y axis line/.append style={black},
every y tick label/.append style={font=\color{black}},
ymin=60.000,
ymax=100.000,
ymajorgrids,
ylabel={\pResolution},
ylabel style={yshift=-10pt},
legend style={at={(0.98,0.28)}, legend cell align=left,align=left,fill=white}
]
\addplot [color=blue,loosely dashed,line width=1.0pt
,mark=o,mark options={line width=0.6pt, solid}
, mark repeat={2}]
  table[row sep=crcr]{%
5.0000	94.0000\\
10.0000	100.0000\\
20.0000	100.0000\\
30.0000	100.0000\\
40.0000	100.0000\\
50.0000	100.0000\\
60.0000	100.0000\\
70.0000	100.0000\\
80.0000	100.0000\\
90.0000	100.0000\\
100.0000	100.0000\\
200.0000	100.0000\\
300.0000	100.0000\\
400.0000	100.0000\\
500.0000	100.0000\\
600.0000	100.0000\\
700.0000	100.0000\\
800.0000	100.0000\\
900.0000	100.0000\\
1000.0000	100.0000\\
};
\addlegendentry{\pSPICE};

\addplot [color=pMaxLikelihoodColor
, dotted
,mark=+,mark options={line width=1.0, solid}
, mark size= \pMarkSize pt
,mark repeat={2}
, mark phase=2
]
  table[row sep=crcr]{%
5.0000	92.0000\\
10.0000	100.0000\\
20.0000	100.0000\\
30.0000	100.0000\\
40.0000	100.0000\\
50.0000	100.0000\\
60.0000	100.0000\\
70.0000	100.0000\\
80.0000	100.0000\\
90.0000	100.0000\\
100.0000	100.0000\\
200.0000	100.0000\\
300.0000	100.0000\\
400.0000	100.0000\\
500.0000	100.0000\\
600.0000	100.0000\\
700.0000	100.0000\\
800.0000	100.0000\\
900.0000	100.0000\\
1000.0000	100.0000\\
};
\addlegendentry{\pMaxLikelihood};

\end{axis}
\end{tikzpicture} \label{fig:n-pd-uc} }
\caption{ %
\small DOA estimation performance, assuming uncorrelated sources, plotted against the number of snapshots $N$   
 for a fixed {\snr}$=-2$ dB: 
 \protect\subref{fig:n-uc} the {\rmse} of the proposed DOA estimation methods
 averaged over $100$ realizations,
 \protect\subref{fig:n-pd-uc}
 the resolution percentage of the proposed DOA estimation methods averaged over $100$ realizations.}
\end{figure*}
\begin{figure*}
\subfloat[]{%

\definecolor{mycolor1}{rgb}{1.00000,0.00000,1.00000}%
\begin{tikzpicture}

\begin{axis}[%
width=\pFigW \columnwidth,
height=\pFigH  \columnwidth,
x dir=reverse,
scale only axis,
separate axis lines,
every outer x axis line/.append style={black},
every x tick label/.append style={font=\pTickSize\color{black}},
xmin=1.0000,
xmax=8.0000,
xlabel={\pL},
xmajorgrids,
every outer y axis line/.append style={black},
every y tick label/.append style={font=\pTickSize\color{black}},
ymode=log,
ymin=0.100,
ymax=10000.0000,
yminorticks=true,
ylabel={\pRmse},
ylabel style={yshift=-10pt},
ymajorgrids,
yminorgrids,
legend style={at={(0.5
5,0.68)},anchor=south west,legend cell align=left,align=left,fill=white}
]
\addplot [color=blue,loosely dashed,line width=1.0pt
,mark=o,mark options={line width=0.6pt, solid}
]
  table[row sep=crcr]{%
1.0000	0.2464\\
2.0000	0.3798\\
3.0000	0.8243\\
4.0000	1.7986\\
5.0000	3.5299\\
6.0000	11.4878\\
7.0000	18.3532\\
8.0000	24.4103\\
};
\addlegendentry{\pSPICE};

\addplot [color=pMaxLikelihoodColor
, dotted
,mark=+,mark options={line width=1.0, solid}
, mark size= \pMarkSize pt
]
  table[row sep=crcr]{%
1.0000	0.1905\\
2.0000	0.3279\\
3.0000	0.5360\\
4.0000	0.8922\\
5.0000	2.5672\\
6.0000	11.2414\\
7.0000	18.3290\\
8.0000	24.0646\\
};
\addlegendentry{\pMaxLikelihood};

\addplot [color=pCrbColor1,solid,line width=\pLineWS pt
]
  table[row sep=crcr]{%
1.0000	0.1916\\
2.0000	0.2609\\
3.0000	0.4131\\
4.0000	0.5971\\
5.0000	1.0837\\
6.0000	1.5321\\
7.0000	55.2461\\
8.0000	2595.2865\\
};
\addlegendentry{\pCrbUC};

\end{axis}
\end{tikzpicture} \label{fig:l-uc}}
\hfill
\subfloat[]{%
 
\begin{tikzpicture}

\begin{axis}[%
width=\pFigW \columnwidth,
height=\pFigH  \columnwidth,
at={(0.808889in,0.513333in)},
x dir=reverse,
scale only axis,
separate axis lines,
every outer x axis line/.append style={black},
every x tick label/.append style={font=\pTickSize\color{black}},
xmin=1.0000,
xmax=8.0000,
xlabel={\pL},
xmajorgrids,
every outer y axis line/.append style={black},
every y tick label/.append style={font=\pTickSize\color{black}},
ymin=0.0000,
ymax=100.0000,
ylabel={\pResolution},
ylabel style={yshift=-10pt},
ymajorgrids,
legend style={at={(0.45,0.05)}, anchor=south west, legend cell align=right,fill=white}
]
\addplot [color=blue,loosely dashed,line width=1.0pt
,mark=o,mark options={line width=0.6pt, solid}]
  table[row sep=crcr]{%
1.0000	100.0000\\
2.0000	100.0000\\
3.0000	100.0000\\
4.0000	100.0000\\
5.0000	100.0000\\
6.0000	0.0000\\
7.0000	0.0000\\
8.0000	0.0000\\
};
\addlegendentry{\pSPICE};

\addplot [color=pMaxLikelihoodColor
, dotted
,mark=+,mark options={line width=1.0, solid}
, mark size= \pMarkSize pt
,mark repeat={1}
]
  table[row sep=crcr]{%
1.0000	100.0000\\
2.0000	100.0000\\
3.0000	100.0000\\
4.0000	100.0000\\
5.0000	100.0000\\
6.0000	0.0000\\
7.0000	0.0000\\
8.0000	0.0000\\
};
\addlegendentry{\pMaxLikelihood};

\end{axis}
\end{tikzpicture} \label{fig:l-pd-uc}}
 \caption{\small DOA estimation performance, assuming uncorrelated sources, plotted against the number of sources $L$   
 for a fixed number of samples $N=50$ and a fixed {\snr}$=-2$ dB: 
 \protect\subref{fig:l-uc} the {\rmse} of the proposed DOA estimation methods
 averaged over $100$ realizations,
 \protect\subref{fig:l-pd-uc}
 the resolution percentage of the proposed DOA estimation methods averaged over $100$ realizations.} 
\end{figure*}

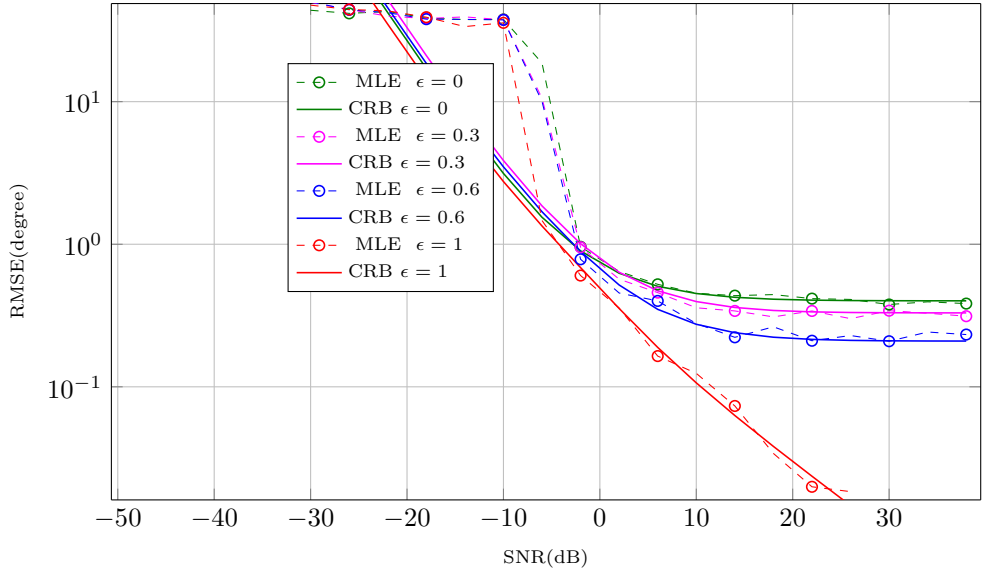
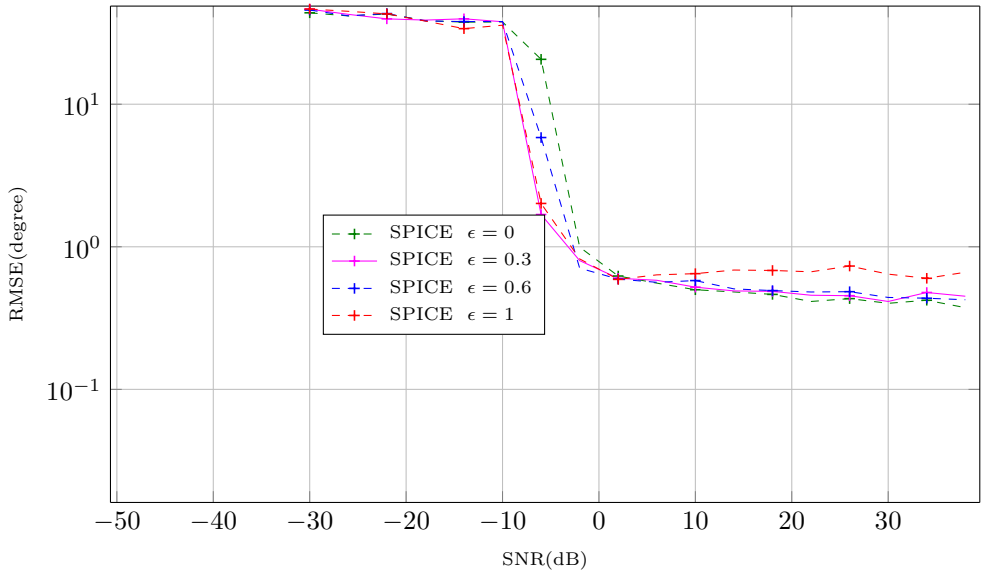
\begin{figure*}[]
\subfloat[]{
\definecolor{mycolor1}{rgb}{1.00000,0.00000,1.00000}%
\definecolor{mycolor2}{rgb}{0.00000,0.49804,0.00000}%
\begin{tikzpicture}

\begin{axis}[%
width=\pFigW \columnwidth,
height=\pFigH  \columnwidth,
at={(1.892222in,0.862889in)},
scale only axis,
separate axis lines,
every outer x axis line/.append style={black},
every x tick label/.append style={font=\color{black}},
xmin=-50.6644,
xmax=39.5000,
xmajorgrids,
every outer y axis line/.append style={black},
every y tick label/.append style={font=\color{black}},
ymode=log,
ymin=0.0161,
ymax=48.7776,
yminorticks=false,
ylabel={\pRmse},
xlabel={\pSnr},
ymajorgrids,
legend style={at={(0.44,0.88)}, legend cell align=left,align=left,draw=black}
]

\addplot [color=mycolor2,dashed
,mark=o,mark options={line width=\pLineW, solid}
, mark size= \pMarkSize 
,mark repeat={2},
mark phase={2}]
  table[row sep=crcr]{%
-30.0000	43.7693\\
-26.0000	41.6258\\
-22.0000	42.9289\\
-18.0000	38.5935\\
-14.0000	37.6511\\
-10.0000	37.7504\\
-6.0000		18.6568\\
-2.0000		0.9597\\
2.0000		0.6447\\
6.0000		0.5235\\
10.0000		0.4534\\
14.0000		0.4360\\
18.0000		0.4435\\
22.0000		0.4156\\
26.0000		0.4101\\
30.0000		0.3779\\
34.0000		0.3946\\
38.0000		0.3835\\
};
\addlegendentry{\pTextSize{{\pMaxLikelihood} $\corr = 0$}};

\addplot [color=mycolor2,solid,line width=\pLineW]
  table[row sep=crcr]{%
-30.0000	263.0516\\
-26.0000	105.0385\\
-22.0000	42.1318\\
-18.0000	17.0872\\
-14.0000	7.1142\\ 
-10.0000	3.1384\\
-6.0000		1.5458\\
-2.0000		0.8981\\
2.0000		0.6263\\
6.0000		0.5067\\
10.0000		0.4512\\
14.0000		0.4245\\
18.0000		0.4116\\
22.0000		0.4055\\
26.0000		0.4028\\
30.0000		0.4017\\
34.0000		0.4012\\
38.0000		0.4010\\
};
\addlegendentry{\pTextSize{{\pCrbUC} $\corr = 0$}};

\addplot [color=mycolor1, dashed
,mark=o,mark options={line width=\pLineW, solid}
, mark size= \pMarkSize 
,mark repeat={2},
mark phase={1}]
  table[row sep=crcr]{%
-30.0000	49.6851\\
-26.0000	44.2857\\
-22.0000	39.7942\\
-18.0000	38.3306\\
-14.0000	39.3170\\
-10.0000	37.5370\\
-6.0000		10.5809\\
-2.0000		0.9529\\
2.0000		0.5688\\
6.0000		0.4578\\
10.0000		0.3584\\
14.0000		0.3406\\
18.0000		0.3095\\
22.0000		0.3405\\
26.0000		0.3007\\
30.0000		0.3420\\
34.0000		0.3284\\
38.0000		0.3128\\
};
\addlegendentry{\pTextSize{{\pMaxLikelihood} $\corr = 0.3$}};
\addplot [color=mycolor1,solid,line width=\pLineW]
  table[row sep=crcr]{%
-30.0000	323.7069\\
-26.0000	129.2907\\
-22.0000	51.8896\\
-18.0000	21.0691\\
-14.0000	8.7844\\
-10.0000	3.8638\\
-6.0000		1.8576\\
-2.0000		1.0083\\
2.0000		0.6362\\
6.0000		0.4713\\
10.0000		0.3962\\
14.0000		0.3603\\
18.0000		0.3432\\
22.0000		0.3355\\
26.0000		0.3321\\
30.0000		0.3308\\
34.0000		0.3302\\
38.0000		0.3300\\
};
\addlegendentry{\pTextSize{{\pCrbUC} $\corr = 0.3$}};

\addplot [color=blue,dashed, 
,mark=o,mark options={line width=\pLineW, solid}
, mark size= \pMarkSize 
,mark repeat={2},
mark phase={1}]
  table[row sep=crcr]{%
-30.0000	50.1301\\
-26.0000	44.0936\\
-22.0000	42.0886\\
-18.0000	37.9004\\
-14.0000	37.7749\\
-10.0000	37.4818\\
-6.0000		10.0443\\
-2.0000		0.7848\\
2.0000		0.4549\\
6.0000		0.4004\\
10.0000		0.2779\\
14.0000		0.2224\\
18.0000		0.2636\\
22.0000		0.2105\\
26.0000		0.2289\\
30.0000		0.2088\\
34.0000		0.2423\\
38.0000		0.2326\\
};
\addlegendentry{\pTextSize{{\pMaxLikelihood} $\corr = 0.6$}};

\addplot [color=blue,solid,line width=\pLineW]
  table[row sep=crcr]{%
-30.0000	283.0889\\
-26.0000	113.1626\\
-22.0000	45.5096\\
-18.0000	18.5664\\
-14.0000	7.8171\\
-10.0000	3.4902\\
-6.0000		1.6902\\
-2.0000		0.8897\\
2.0000		0.5179\\
6.0000		0.3494\\
10.0000		0.2745\\
14.0000		0.2396\\
18.0000		0.2228\\
22.0000		0.2151\\
26.0000		0.2117\\
30.0000		0.2102\\
34.0000		0.2096\\
38.0000		0.2094\\
};
\addlegendentry{\pTextSize{{\pCrbUC} $\corr = 0.6$}};

\addplot [color=red,dashed, 
,mark=o,mark options={line width=\pLineW, solid}
, mark size= \pMarkSize 
,mark repeat={2},
mark phase={2}]
  table[row sep=crcr]{%
-30.0000	47.5255\\
-26.0000	44.2968\\
-22.0000	43.3540\\
-18.0000	39.1470\\
-14.0000	33.6284\\
-10.0000	35.7236\\
-6.0000		1.4671\\
-2.0000		0.6031\\
2.0000		0.3572\\
6.0000		0.1643\\
10.0000		0.1243\\
14.0000		0.0734\\
18.0000		0.0341\\
22.0000		0.0199\\
26.0000		0.0184\\
};
\addlegendentry{\pTextSize{{\pMaxLikelihood} $\corr = 1$}};

\addplot [color=red,solid,line width=\pLineW]
  table[row sep=crcr]{%
-30.0000	216.5919\\
-26.0000	86.6521\\
-22.0000	34.9175\\
-18.0000	14.3104\\
-14.0000	6.0808\\
-10.0000	2.7522\\
-6.0000		1.3425\\
-2.0000		0.6854\\
2.0000		0.3543\\
6.0000		0.1890\\
10.0000		0.1066\\
14.0000		0.0629\\
18.0000		0.0382\\
22.0000		0.0236\\
26.0000		0.0147\\
30.0000		0.0092\\
34.0000		0.0058\\
38.0000		0.0037\\
};
\addlegendentry{\pTextSize{{\pCrbUC} $\corr = 1$}};

\end{axis}
\end{tikzpicture} \label{fig:snr-c}}
\hfill
\subfloat[]{%
\definecolor{mycolor1}{rgb}{1.00000,0.00000,1.00000}%
\definecolor{mycolor2}{rgb}{0.00000,0.49804,0.00000}%
\begin{tikzpicture}

\begin{axis}[%
width=\pFigW \columnwidth,
height=\pFigH  \columnwidth,
at={(1.892222in,0.862889in)},
scale only axis,
separate axis lines,
every outer x axis line/.append style={black},
every x tick label/.append style={font=\color{black}},
xmin=-50.6644,
xmax=39.5000,
xmajorgrids,
every outer y axis line/.append style={black},
every y tick label/.append style={font=\color{black}},
ymode=log,
ymin=0.0161,
ymax=48.7776,
yminorticks=false,
ylabel={\pRmse},
xlabel={\pSnr},
ymajorgrids,
legend style={at={(0.50,0.58)}, legend cell align=left,align=left,draw=black}
]

\addplot [color=mycolor2,dashed
,mark=+,mark options={line width=\pLineW, solid}
, mark size= \pMarkSize pt
,mark repeat={2}]
  table[row sep=crcr]{%
-30.0000	43.7693\\
-26.0000	41.6258\\
-22.0000	42.9287\\
-18.0000	38.5935\\
-14.0000	37.6512\\
-10.0000	37.7504\\
-6.0000		20.6483\\
-2.0000		0.9947\\
2.0000		0.6238\\
6.0000		0.5561\\
10.0000		0.4998\\
14.0000		0.4822\\
18.0000		0.4636\\
22.0000		0.4135\\
26.0000		0.4330\\
30.0000		0.4014\\
34.0000		0.4218\\
38.0000		0.3743\\
};
\addlegendentry{\pTextSize{{\pSPICE} $\corr = 0$}};

\addplot [color=mycolor1,solid
,mark=+,mark options={line width=\pLineW, solid}
, mark size= \pMarkSize
,mark repeat={2},
mark phase={1}]
  table[row sep=crcr]{%
-30.0000	46.4108\\
-26.0000	42.6970\\
-22.0000	39.6907\\
-18.0000	38.9162\\
-14.0000	39.7372\\
-10.0000	37.8802\\
-6.0000		1.6843\\
-2.0000		0.8000\\
2.0000		0.6012\\
6.0000		0.5822\\
10.0000		0.5211\\
14.0000		0.4904\\
18.0000		0.4868\\
22.0000		0.4572\\
26.0000		0.4539\\
30.0000		0.4135\\
34.0000		0.4780\\
38.0000		0.4500\\
};
\addlegendentry{\pTextSize{{\pSPICE} $\corr = 0.3$}};

\addplot [color=blue,dashed, 
,mark=+,mark options={line width=\pLineW, solid}
, mark size= \pMarkSize
,mark repeat={2},
mark phase={1}]
  table[row sep=crcr]{%
-30.0000	45.7706\\
-26.0000	41.6696\\
-22.0000	43.2315\\
-18.0000	38.3650\\
-14.0000	37.9941\\
-10.0000	38.0062\\
-6.0000		5.8383\\
-2.0000		0.7083\\
2.0000		0.5933\\
6.0000		0.5635\\
10.0000		0.5794\\
14.0000		0.5067\\
18.0000		0.4930\\
22.0000		0.4819\\
26.0000		0.4843\\
30.0000		0.4413\\
34.0000		0.4359\\
38.0000		0.4251\\
};
\addlegendentry{\pTextSize{{\pSPICE} $\corr = 0.6$}};

\addplot [color=red,dashed, 
,mark=+,mark options={line width=\pLineW, solid}
, mark size= \pMarkSize 
,mark repeat={2},
mark phase={1}]
  table[row sep=crcr]{%
-30.0000	46.5312\\
-26.0000	44.8334\\
-22.0000	43.0192\\
-18.0000	38.2872\\
-14.0000	33.8922\\
-10.0000	35.8312\\
-6.0000		2.0138\\
-2.0000		0.8222\\
2.0000		0.5908\\
6.0000		0.6356\\
10.0000		0.6481\\
14.0000		0.6870\\
18.0000		0.6823\\
22.0000		0.6678\\
26.0000		0.7328\\
30.0000		0.6430\\
34.0000		0.6012\\
38.0000		0.6629\\
};
\addlegendentry{\pTextSize{{\pSPICE} $\corr = 1$}};

\end{axis}
\end{tikzpicture} \label{fig:snr-ssr-c}}
\caption{\small {\rmse} for correlated sources as a function of $\snr$
for a fixed number of samples $N=50$
for different values of the correlation factor $\corr$:
\protect\subref{fig:snr-c} the averaged {\rmse} for the MLE approach,
\protect\subref{fig:snr-ssr-c} the averaged {\rmse} for the SPICE approach.
}
\end{figure*}

In this section, we analyze the performance of 
our proposed MLE and SSR estimation methods
using simulations for both cases uncorrelated and correlated sources.
The cvx \cite{cvx} framework is used to solve 
the SPICE optimization problem in (\ref{eq:spice}),
where the the field-of-view is sampled every $0.1^{\circ}$.

The MLE is initialized with the solution of the SPICE method and the MATLAB command
\emph{fmincon} is used to compute the MLE 
as presented in (\ref{eq:doa-uc}) and (\ref{eq:doa-C}) for uncorrelated and correlated sources, respectively.
 
In our simulations,
an array composed of $K=12$ subarrays each is comprised of $2$ sensors is considered.
The location of the first sensors in the $12$ subarrays
measured in half-wavelength are
$(0,         0)$,
$(17.3,    6)$,
$(-2.4,   6.2)$,
$(10.5,   -2)$,
$(12.7,    2.1)$,
$(4.6,   -2.4)$,
$(4.6,    4.5)$,
$(4.5,    5.3)$,
$(2.3,    9)$,
$(10.2,    8.1)$,
$(10.2,    4)$,
and
$(13.4,    6)$.
These locations are considered to be unknown 
during the DOA estimation process.
The locations of the second sensors in each subarray
with respect to the first sensor in the corresponding subarray
measured in half-wavelength are
$(6.5, 0)$,
$(4.4, 0)$,
$(3.5, 0)$,
$(2.6, 0)$,
$(2.6, 0)$,
$(2.5, 0)$,
$(1.9, 0)$,
$(1.5, 0)$,
$(1.4, 0)$,
$(1.3, 0)$,
$(1, 0)$,
and
$(0.5, 0)$. 
These locations are considered to be known.
Signals of two far-field equal-powered uncorrelated sources 
are impinging onto the subarrays
from directions $-11.4^{\circ}$ $-1.1^{\circ}$.
In our simulations, the root mean square error (RMSE)
for the estimated DOAs is computed over $100$ realizations as
\begin{equation}
 \big( \frac{1}{100} \sum_{i=1}^{100} \frac{1}{L} \sum_{l=1}^L (\hat{\DOA}_l(i) - \DOA_l)^2  \big)^{1/2},
\label{eq:rmse}
\end{equation}
where $\hat{\DOA}_l(i)$ is the estimate of the $l$th DOA at realization $i$.
The RMSE in (\ref{eq:rmse}) is computed for the SPICE and the MLE approaches.
We also display the CRB computed as
\begin{equation}
\big( \frac{1}{L} \sum_{i=1}^L [\Crb_{\DOAs}]_{l,l}  \big)^{1/2},
\end{equation}
where $[\Crb_{\DOAs}]_{l,l}$ is the $l$th diagonal entry of the matrix $\Crb_{\DOAs}$. 

In Fig.~\ref{fig:snr-uc},
the averaged performance of the SPICE and the MLE
for a fixed number of samples $N=50$
is plotted against SNR.
It can be observed in Fig.~\ref{fig:snr-uc}
that the MLE  and the SPICE method 
achieves the CRB at high SNR.
In Fig.~\ref{fig:snr-pd-uc},
the source resolution percentage of the considered DOA estimation methods is plotted against the SNR,
where two sources are considered to be resolved if the 
error in the estimated DOAs is less than half of the angular separation between the 
two sources \cite{Parvazi2011}.
Observe that for $\snr \geq -8$ dB, the MLE and SPICE method can always identify the sources
and for $\snr \leq -20$ dB the resolution percentage is almost zero.

In Fig.~\ref{fig:n-uc}, the RMSE of DOA estimation using SPICE and MLE is 
plotted against the number of snapshots $N$ for a fixed $\snr=-2$ dB.
The MLE achieves the CRB for $N \geq 20$ samples,
whereas the SPICE method is above the CRB because of the bias resulting from the
nature of the SSR approaches \cite{malioutov2005sparse}.
In Fig.~\ref{fig:n-pd-uc}, 
the source resolution percentage is plotted against $N$.
Observe that the SPICE and the MLE achieve $100\%$ resolution percentage 
for $N \geq 20$.

In Fig.~\ref{fig:l-uc} and Fig.~\ref{fig:l-pd-uc},
for a fixed $\snr$ of $-2$ dB and fixed number of samples $N=50$, 
the number of sources $L$ is changed.
The source DOAs are chosen in order from the set $\{15^{\circ}$, $-15^{\circ}$, 
$30^{\circ}$, $-30^{\circ}$, $45^{\circ}$, $-45^{\circ}$, 
$60^{\circ}$, $-60^{\circ}\}$. 
Observe in Fig.~\ref{fig:l-uc} that for small number of 
sources $L \leq 4$ the MLE and the SPICE achieves the CRB. 
In Fig.~\ref{fig:l-pd-uc},
it can be seen that for $L \leq 5$ both the SPICE and the MLE methods 
are always able to identify the sources.
We remark that since $M_k=2$ for $k=1 \mdots K$
none of the subarrays can individually identify more than one source,
however, with our proposed methods,
which exploit the diverse structure of the subarrays, 
up to $L=5$ sources can be identified.

In the following,
we investigate the performance of the MLE and SPICE 
considering $L=2$  correlated sources. 
In Fig.~\ref{fig:snr-c}, the number of samples is fixed to $N=50$ 
and the {\rmse} for DOA estimation of the MLE is plotted against $\snr$ 
for different values of the correlation factor $\corr=0, 0.3, 0.6$, and $\corr = 1$.
Note that the RMSE decreases by increasing $\corr$.
For coherent sources, i.e., $\corr=1$,
the RMSE approaches zero for high SNR,
which is in correspondence to our discussion in Section~\ref{sec:corr-crb}.
The averaged performance of the SPICE for the same scenario is shown in Fig.~\ref{fig:snr-ssr-c}.
Note that the SPICE method is robust against the assumption of correlated sources,
i.e., 
the performance of SPICE does not degrade much with the increased correlation between the sources, 
see \cite{stoica2011spice}.
\section{Summary and Conclusions}
In this paper, we considered non-coherent DOA estimation using partly calibrated arrays.
The presentation is focused on the case where none of the subarrays
is able to individually identify all the sources.
A sufficient condition for uncorrelated sources identifiability
using non-coherent processing is presented.
We proved that using non-coherent processing it is possible to
identify more sources than each subarray individually can.   
Moreover, the CRB for non-coherent processing is derived 
and its behaviour at high SNR is analyzed.
Two methods, namely the MLE and SPICE,
are proposed to estimate the DOAs
from the sample covariance matrices received from all subarrays. 
Using the simulations, the performance of the MLE is shown to  achieve the derived CRB.

\section{Acknowledgements}
The project ADEL acknowledges the financial support of the Seventh
Framework Programme for Research of the European Commission under
grant number: 619647.
\appendix
\section{Proof of Theorem~\ref{thm:id}}
\label{apx:thm-id}
The proof of Theorem~\ref{thm:id} consists in showing the 
sufficiency of the condition (\ref{eq:id-nc}).
We remark that for fully calibrated arrays using coherent processing
a bound on the maximum number of identifiable sources
is introduced in \cite{hochwald1996identifiability}.
This bound is not applicable in our case since in \cite{hochwald1996identifiability} 
the covariance matrix of the whole array is assumed to be available
and thus the bound is introduced using the rank of the matrix $\STRkn$ 
and not $\bSTRkn$ as is this paper.
Our proof of the bound is similar in spirit to that 
of \cite{hochwald1996identifiability}. 

\subsection{Sufficiency of (\ref{eq:id-nc})}
In this section,  
we prove that if
$\bSTRkn (\DOAs) \SCOVdv = \bSTRkn (\DOAs') \SCOVdv'$
and
$L \leq \Floor{ \frac{\RNK}{2}}$ then
$\DOAs =  \DOAs'$. 
\begin{proof}
Assume that there are $q \leq L \leq \Floor{\frac{\RNK}{2}}$ 
entries which occur in both DOA vectors 
$\DOAs$ and $\DOAs'$.
Then, $\DOAs$ and $\DOAs'$
can be split as
$\DOAs= [\DOAs_1^T, \DOAs_2^T]^T$
and
$\DOAs'= [\DOAs_1'^T, \DOAs_2'^T]^T$
such that $\DOAs_1 = \DOAs_1' \in \reals^{q \times 1}$ 
and that the DOAs $\DOAs_2$ and
$\DOAs_2'$ are all different.
Moreover, we define $\SCOVdv=[\SCOVdv_1^T, \SCOVdv_2^T]^T$ 
and 
$\SCOVdv'=[\SCOVdv_1'^T, \SCOVdv_2'^T]^T$,
where  
$\SCOVdv_1, \SCOVdv_2, \SCOVdv_1'$,
and
$\SCOVdv_2'$
contain the power of the sources corresponding to the
DOAs 
$\DOAs_1, \DOAs_2, \DOAs_1'$
and
$\DOAs_2'$, respectively.
Thus, the assumption that
$\bSTRkn (\DOAs) \SCOVdv = \bSTRkn (\DOAs') \SCOVdv'$
can be written as
\begin{equation}
[\bSTRkn (\DOAs_1), \bSTRkn (\DOAs_2) ]
[\SCOVdv_1^T, \SCOVdv_2^T]^T
= 
[\bSTRkn (\DOAs_1'), \bSTRkn ( \DOAs_2' )]
[\SCOVdv_1'^T, \SCOVdv_2'^T]^T.
\label{eq:id-proof-1}
\end{equation}
Since 
$\bSTRkn (\DOAs_1) = \bSTRkn (\DOAs_1')$,
(\ref{eq:id-proof-1}) can be rearranged as
\begin{equation}
[\bSTRkn (\DOAs_1), \bSTRkn (\DOAs_2), \bSTRkn (\DOAs_2')]
[\SCOVdv_1^T - \SCOVdv_1'^T, \SCOVdv_2^T, -\SCOVdv_2'^T]^T = 0.
\label{eq:id-proof-2}
\end{equation}
Next, we distinguish between the following two cases:
\begin{enumerate}
  \item $q=L$: In this case $\DOAs_1=\DOAs=\DOAs'$ and $\SCOVdv = \SCOVdv'$
		is a unique solution to (\ref{eq:id-proof-2}), i.e., 
		in this case the DOAs are uniquely identifiable.
  \item $q < L$: In this case,
the matrix $[\bSTRkn (\DOAs_1)$, $\bSTRkn (\DOAs_2)$, $\bSTRkn (\DOAs_2') ]$
contains $2L-q$ columns corresponding to different DOAs.
Since $q < L$ and $L \leq \Floor{\frac{\RNK{}}{2}}$ 
the inequality $2L-q \leq 2L \leq \RNK{}$ holds.
Consequently, the matrix
$[\bSTRkn (\DOAs_1), \bSTRkn (\DOAs_2), \bSTRkn (\DOAs_2') ]$
is full rank
and (\ref{eq:id-proof-2}) can only be satisfied,
in this case,
if $[\SCOVdv_1^T - \SCOVdv_1'^T, \SCOVdv_2^T, -\SCOVdv_2'^T]^T = 0$.
However, this is not possible since it implies that 
$\SCOVdv_2'=\SCOVdv_2=0$,
i.e., the sources corresponding to the DOAs $\DOAs_2$ and $\DOAs_2'$ have zero power.
\end{enumerate}
Thus, (\ref{eq:id-proof-2})
can only be satisfied in case 1) 
which proves the theorem. 

\end{proof}


\begin{thebibliography}{10}

\bibitem{VanTrees2002}
H.~L.~Van~Trees.
\newblock {Detection, Estimation, and Modulation Theory - Part IV: Optimum Array Processing}.
\newblock Wiley-Interscience, 2002

\bibitem{Schmidt1986}
R.~Schmidt.
\newblock Multiple emitter location and signal parameter estimation.
\newblock {\em IEEE Transactions on Antennas and Propagation}, 34(3):276--280, Mar. 1986.

\bibitem{barabell1983improving} 
A.~Barabell.
\newblock Improving the resolution performance of eigenstructure-based direction-finding algorithms.
\newblock {\em IEEE International Conference on Acoustics,
Speech and Signal Processing (ICASSP)}, vol. 8. IEEE, 1983, pp. 336–339.

\bibitem{Stoica1990}
P.~Stoica and K.~C. Sharman.
\newblock Maximum likelihood methods for direction-of-arrival estimation.
\newblock {\em IEEE Transactions on Acoustics, Speech and Signal Processing},
  38(7):1132--1143, July 1990.
  
\bibitem{Viberg1991}
 M.~Viberg and B.~Ottersten and T.~Kailath.
\newblock Detection and estimation in sensor arrays using weighted subspace fitting.
\newblock {\em IEEE Transactions on Signal Processing}, vol. 39, no. 11, pp. 2436–2449, Nov. 1991. 
  
\bibitem{Roy1989}
R.~Roy and T.~Kailath.
\newblock {ESPRIT}-estimation of signal parameters via rotational invariance
  techniques.
\newblock {\em IEEE Transactions on Acoustics, Speech and Signal Processing},
  37(7):984--995, Nov. 1989.
  
\bibitem{Pesavento2002}
M.~Pesavento, A.~B. Gershman, and K.~M. Wong.
\newblock Direction finding in partly calibrated sensor arrays composed of
  multiple subarrays.
\newblock {\em IEEE Transactions on Signal Processing}, 50(9):2103--2115, Sept.
  2002.
  
 \bibitem{parvazi2011new}
P.~Parvazi and M.~Pesavento  and A.~B.~Gershman.
\newblock Direction-of-arrival estimation and array calibration for partly-calibrated arrays.
\newblock In {\em IEEE International
Conference on Acoustics, Speech and Signal Processing (ICASSP)}, Prague, Czech Republic, May 2011, pp. 2552–2555.

\bibitem{Parvazi2011}
P.~Parvazi and M.~Pesavento. 
\newblock A new direction-of-arrival estimation and calibration method for
  arrays composed of multiple identical subarrays.
\newblock In {\em IEEE International Workshop on Signal Processing Advances in
  Wireless Communications (SPAWC)}, pages 171--175, San Francisco, CA, USA,
  June 2011.
  

\bibitem{stoica1995decentralized}
 P.~Stoica and A.~Nehorai and T.~S\"oderstr\"om.
\newblock Decentralized array processing using the MODE algorithm.
\newblock {\em Circuits, Systems and Signal Processing}, vol. 14,
no. 1, pp. 17–38, 1995.


\bibitem{wax1985decentralized}
M.~Wax and T.~Kailath.
\newblock Decentralized processing in sensor arrays.
\newblock {\em IEEE Transactions on Acoustics, Speech and Signal Processing},
  33(5):1123--1129, Oct. 1985.


\bibitem{rieken2004generalizing}
D.~Rieken and D.~Fuhrmann.
\newblock Generalizing MUSIC and MVDR for multiple noncoherent arrays.
\newblock {\em IEEE Transactions on Signal Processing},
 vol. 52,
no. 9, pp. 2396–2406, Sept 2004.


\bibitem{Soderstrom1992}
T.~S\"oderstr\"om and P.~Stoica.
\newblock Statistical analysis of decentralized MUSIC.
\newblock {\em Springer Circuits, Systems and Signal Processing},
vol. 11, no. 4, pp. 443–454,
Dec. 1992.

\bibitem{lee1990robust} 
 D.~ D.~Lee and R.~L.~Kashyap and R.~N.~Madan.
\newblock Robust decentralized direction-of-arrival estimation in contaminated noise.
\newblock {\em IEEE Transactions on
Acoustics, Speech and Signal Processing},vol. 38, no. 3, pp. 496–505, Mar. 1990.


\bibitem{suleiman2014noncoherent}
W.~Suleiman and P.~Parvazi.
\newblock Search-free decentralized direction-of-arrival estimation using
  common roots for non-coherent partly calibrated arrays.
\newblock In {\em IEEE International Conference on Acoustics, Speech and Signal
  Processing (ICASSP)}, pages 2292--2296, May 2014.
  
 \bibitem{sheinvald1999direction}
J.~Sheinvald and M.~Wax.
\newblock Direction finding with fewer receivers via time-varying preprocessing.
\newblock In {\em IEEE transactions on signal processing}, vol. 47,
no. 1, pp. 2–9, 1999.

 \bibitem{kim2015non}
H.~Kim and A.~M.~Haimovich and Y.~C.~Eldar.
\newblock Non-coherent direction of arrival estimation from magnitude-only measurements.
\newblock In {\em IEEE Signal Processing
Letters}, vol. 22, no. 7, pp. 925–929, 2015.

\bibitem{Scaglione2008}
A.~Scaglione, R.~Pagliari, and H.~Krim.
\newblock The decentralized estimation of the sample covariance.
\newblock In {\em Asilomar Conference on Signals, Systems, and Computers},
  pages 1722--1726, Pacific Grove, CA, USA, Oct. 2008.


 \bibitem{suleiman2015j2}
 W.~Suleiman and M.~Pesavento and A.~M.~Zoubir.
\newblock Performance analysis of the decentralized eigendecomposition and ESPRIT algorithm.
\newblock In {\em IEEE Transactions
on Signal Processing}, vol. 64, no. 9, pp. 2375–2386, May 2016.

\bibitem{graham1981kronecker}
A.~Graham.
\newblock {Kronecker products and matrix calculus: with applications}.
\newblock ser. Ellis Horwood series in mathematics and its applications. Horwood, 1981.

 \bibitem{Abramovich1999Resolving}
Y.~I.~Abramovich and N.~K.~Spencer and A.~Y.~Gorokhov.
\newblock Resolving manifold ambiguities in direction-of-arrival estimation for nonuniform linear antenna arrays.
\newblock In {\em IEEE Transactions on Signal Processing},  vol. 47, no. 10, pp. 2629–2643, Oct 1999.

 \bibitem{hochwald1996identifiability}
B.~Hochwald and A.~Nehorai.
\newblock Identifiability in array processing models with vector-sensor applications.
\newblock In {\em IEEE Transactions on Signal Processing},  vol. 44, no. 1, pp. 83–95, 1996.

\bibitem{stegeman2007kruskal}
A.~Stegeman and N.~D.~Sidiropoulos.
\newblock On Kruskal's uniqueness condition for the Candecomp/Parafac decomposition.
\newblock In {\em Linear Algebra and its applications},  vol. 420, no. 2, pp. 540–552, 2007.

\bibitem{kruskal1977three}
J. B. Kruskal.
\newblock Three-way arrays: rank and uniqueness of trilinear decompositions, with application to arithmetic complexity and statistics.
\newblock In {\em Linear Algebra and its applications},  vol. 18, no. 2, pp. 95–138, 1977.

\bibitem{wax1989unique}
M.~Wax and I.~Ziskind.
\newblock On unique localization of multiple sources by passive sensor arrays.
\newblock In {\em IEEE Transactions on Acoustics, Speech and Signal
Processing},   vol. 37, no. 7, pp. 996–1000, 1989.

\bibitem{schreier2010statistical}
 P.~J.~Schreier and L.~L.~Scharf.
\newblock Statistical signal processing of complex-valued data: the theory of improper and noncircular signals.
\newblock Cambridge
University Press, 2010.

\bibitem{boyd2004convex}
S.~P.~Boyd and L.~Vandenberghe.
\newblock Convex optimization.
\newblock Cambridge university press, 2004.



\bibitem{sheinvald1997achievable}
J.~Sheinvald and M.~Wax and A.~J.~Weiss.
\newblock On the achievable localization accuracy of multiple sources at high SNR.
\newblock IEEE Transactions on Signal Processing, vol. 45, no. 7, pp. 1795–1799, Jul 1997.


\bibitem{abadir2005matrix}
K.~M.~Abadir and J.~R.~Magnus.
\newblock Matrix algebra.
\newblock  Cambridge University Press, 2005, vol. 1.

\bibitem{abramovich1998positive}
 Y.~I.~Abramovich and D.~A.~Gray and A.~Y.~Gorokhov and N.~K~Spencer.
\newblock Positive-definite Toeplitz completion in DOA estimation for nonuniform linear
antenna arrays. I. fully augmentable arrays.
\newblock IEEE Transactions on Signal Processing, vol. 46, no. 9, pp. 2458–2471, 1998.


\bibitem{tibshirani1996regression}
 R.~Tibshirani.
\newblock Regression shrinkage and selection via the LASSO.
\newblock Journal of the Royal Statistical Society. Series B (Methodological), pp. 267–288,
1996.


\bibitem{donoho2006stable}
 D.~L.~Donoho, M~Elad, and V.~N.~Temlyakov.
\newblock Stable recovery of sparse overcomplete representations in the presence of noise.
\newblock IEEE Transactions on Information Theory, vol. 52, no. 1, pp. 6–18, 2006.


\bibitem{donoho2001uncertainty}
D.~L.~Donoho and X.~Huo.
\newblock Uncertainty principles and ideal atomic decomposition.
\newblock IEEE Transactions on Information Theory, vol. 47, no. 7, pp.
2845–2862, 2001.

\bibitem{donoho2008fast}
D.~L.~Donoho and Y.~Tsaig.
\newblock Fast solution of-norm minimization problems when the solution may be sparse.
\newblock IEEE Transactions on Information Theory,
vol. 54, no. 11, pp. 4789–4812, 2008.


\bibitem{candes2008introduction}
E.~J.~Candes and M.~B.~Wakin.
\newblock An introduction to compressive sampling," .
\newblock IEEE Signal Processing Magazine, vol. 25, no. 2, pp. 21–30, 2008.


\bibitem{malioutov2005sparse}
 D.~Malioutov and M.~{\c{C}}etin and A.~S.~Willsky.
\newblock A sparse signal reconstruction perspective for source localization with sensor arrays.
\newblock  IEEE Transactions
on Signal Processing, vol. 53, no. 8, pp. 3010–3022, 2005.


\bibitem{hyder2010direction}
M.~M.~Hyder and K.~Mahata.
\newblock Direction-of-arrival estimation using a mixed norm approximation.
\newblock IEEE Transactions on Signal Processing, vol. 58,
no. 9, pp. 4646–4655, 2010.

\bibitem{steffens2014direction}
 C.~Steffens and P.~Parvazi and M.~Pesavento.
\newblock Direction finding and array calibration based on sparse reconstruction in partly calibrated arrays.
\newblock in IEEE
Sensor Array and Multichannel Signal Processing Workshop (SAM), 2014, pp. 21–24.

\bibitem{yin2011direction}
J.~Yin and T.~Chen.
\newblock Direction-of-arrival estimation using a sparse representation of array covariance vectors.
\newblock IEEE Transactions on Signal Processing,
vol. 59, no. 9, pp. 4489–4493, 2011.

\bibitem{atashbar2012direction}
 M.~Atashbar and M.~H.~Kahaei.
\newblock Direction-of-arrival estimation using AMLSS method.
\newblock IEEE Latin America Transactions, vol. 10, no. 5, pp. 2053–2058,
2012.


\bibitem{cvx}
I.~CVX~Research.
\newblock CVX: Matlab software for disciplined convex programming, version 2.0.
\newblock, http://cvxr.com/cvx, Aug. 2012.


\bibitem{steffens2016compact}
C.~Steffens and M.~Pesavento and M.~E.~Pfetsch.
\newblock A compact formulation for the $\ell_{2,1}$ mixed-norm minimization problem.
\newblock arXiv:1606.07231, 2016.


\bibitem{stoica2011spice}
 P.~Stoica and P.~Babu  and J.~Li.
\newblock SPICE: a sparse covariance-based estimation method for array processing.
\newblock IEEE Transactions on Signal Processing,
vol. 59, no. 2, pp. 629–638, 2011.

\end{thebibliography}
\bibliographystyle{plain}

\end{document}